
 \documentclass[journal]{IEEEtran}

\usepackage[bookmarks,colorlinks,pagebackref]{hyperref}
\usepackage{amsthm}
\usepackage{amsmath,balance}
\usepackage{graphicx}
\usepackage{color, verbatim}
\usepackage[usenames,dvipsnames]{xcolor}
\usepackage{mdwmath}
\usepackage{amsmath}
\usepackage{epsfig}
\usepackage{url}
\usepackage{tikz}
\usetikzlibrary{petri}
\usetikzlibrary{fit}
\usetikzlibrary{arrows,matrix,positioning}
\usepgflibrary{shapes.multipart} 
\usepackage{amssymb}
\usepackage{cleveref}
\usepackage{subfig}
\usepackage{multirow}
\usepackage{array}
\usepackage{pgfplots}
\pgfplotsset{compat}
\usepackage{mathtools}
\usepackage{cite,calc}

\usepackage{enumitem}
\usepackage[norelsize,linesnumbered]{algorithm2e}

\DeclarePairedDelimiter{\ceil}{\lceil}{\rceil}
\DeclarePairedDelimiter{\floor}{\lfloor}{\rfloor}
\DeclarePairedDelimiter{\set}{\lbrace}{\rbrace}
\DeclarePairedDelimiter{\paren}{\lparen}{\rparen}
\DeclarePairedDelimiter{\br}{\lparen}{\rparen}
\DeclarePairedDelimiter{\brac}{\lbrack}{\rbrack}
\DeclarePairedDelimiter{\abs}{\lvert}{\rvert}

\crefname{property}{property}{properties}
\crefname{condition}{condition}{conditions}
\crefname{Appendix}{Appendix}{Appendices}
\crefname{algocf}{algorithm}{algorithms}
\Crefname{algocf}{Algorithm}{Algorithms}
\crefname{appsec}{Appendix}{Appendices}

\def\defeq{\mathrel{\mathop:}=}

\interdisplaylinepenalty=2500 

\newcommand\nc\newcommand
\nc\bfa{{\boldsymbol a}}\nc\bfA{{\boldsymbol A}}\nc\cA{{\mathcal A}}\nc\frA{{\mathfrak A}}\nc\fra{{\mathfrak a}}\nc\bbA{{\mathbb A}}\nc\bba{{\mathbb a}}
\nc\bfb{{\boldsymbol b}}\nc\bfB{{\boldsymbol B}}\nc\cB{{\mathcal B}}\nc\frB{{\mathfrak B}}\nc\frb{{\mathfrak b}}\nc\bbB{{\mathbb B}}\nc\bbb{{\mathbb b}}
\nc\bfc{{\boldsymbol c}}\nc\bfC{{\boldsymbol C}}\nc\cC{{\mathcal C}}\nc\frC{{\mathfrak C}}\nc\frc{{\mathfrak c}}\nc\bbC{{\mathbb C}}\nc\bbc{{\mathbb c}}
\nc\bfd{{\boldsymbol d}}\nc\bfD{{\boldsymbol D}}\nc\cD{{\mathcal D}}\nc\frD{{\mathfrak D}}\nc\frd{{\mathfrak d}}\nc\bbD{{\mathbb D}}\nc\bbd{{\mathbb d}}
\nc\bfe{{\boldsymbol e}}\nc\bfE{{\boldsymbol E}}\nc\cE{{\mathcal E}}\nc\frE{{\mathfrak E}}\nc\fre{{\mathfrak e}}\nc\bbE{{\mathbb E}}\nc\bbe{{\mathbb e}}
\nc\bff{{\boldsymbol f}}\nc\bfF{{\boldsymbol F}}\nc\cF{{\mathcal F}}\nc\frF{{\mathfrak F}}\nc\frf{{\mathfrak f}}\nc\bbF{{\mathbb F}}\nc\bbf{{\mathbb f}}
\nc\bfg{{\boldsymbol g}}\nc\bfG{{\boldsymbol G}}\nc\cG{{\mathcal G}}\nc\frG{{\mathfrak G}}\nc\frg{{\mathfrak g}}\nc\bbG{{\mathbb G}}\nc\bbg{{\mathbb g}}
\nc\bfh{{\boldsymbol h}}\nc\bfH{{\boldsymbol H}}\nc\cH{{\mathcal H}}\nc\frH{{\mathfrak H}}\nc\frh{{\mathfrak h}}\nc\bbH{{\mathbb H}}\nc\bbh{{\mathbb h}}
\nc\bfi{{\boldsymbol i}}\nc\bfI{{\boldsymbol I}}\nc\cI{{\mathcal I}}\nc\frI{{\mathfrak I}}\nc\fri{{\mathfrak i}}\nc\bbI{{\mathbb I}}\nc\bbi{{\mathbb i}}
\nc\bfj{{\boldsymbol j}}\nc\bfJ{{\boldsymbol J}}\nc\cJ{{\mathcal J}}\nc\frJ{{\mathfrak J}}\nc\frj{{\mathfrak j}}\nc\bbJ{{\mathbb J}}\nc\bbj{{\mathbb j}}
\nc\bfk{{\boldsymbol k}}\nc\bfK{{\boldsymbol K}}\nc\cK{{\mathcal K}}\nc\frK{{\mathfrak K}}\nc\frk{{\mathfrak k}}\nc\bbK{{\mathbb K}}\nc\bbk{{\mathbb k}}
\nc\bfl{{\boldsymbol l}}\nc\bfL{{\boldsymbol L}}\nc\cL{{\mathcal L}}\nc\frL{{\mathfrak L}}\nc\frl{{\mathfrak l}}\nc\bbL{{\mathbb L}}\nc\bbl{{\mathbb l}}
\nc\bfm{{\boldsymbol m}}\nc\bfM{{\boldsymbol M}}\nc\cM{{\mathcal M}}\nc\frM{{\mathfrak M}}\nc\frm{{\mathfrak m}}\nc\bbM{{\mathbb M}}\nc\bbm{{\mathbb m}}
\nc\bfn{{\boldsymbol n}}\nc\bfN{{\boldsymbol N}}\nc\cN{{\mathcal N}}\nc\frN{{\mathfrak N}}\nc\frn{{\mathfrak n}}\nc\bbN{{\mathbb N}}\nc\bbn{{\mathbb n}}
\nc\bfo{{\boldsymbol o}}\nc\bfO{{\boldsymbol O}}\nc\cO{{\mathcal O}}\nc\frO{{\mathfrak O}}\nc\fro{{\mathfrak o}}\nc\bbO{{\mathbb O}}\nc\bbo{{\mathbb o}}
\nc\bfp{{\boldsymbol p}}\nc\bfP{{\boldsymbol P}}\nc\cP{{\mathcal P}}\nc\frP{{\mathfrak P}}\nc\frp{{\mathfrak p}}\nc\bbP{{\mathbb P}}\nc\bbp{{\mathbb p}}
\nc\bfq{{\boldsymbol q}}\nc\bfQ{{\boldsymbol Q}}\nc\cQ{{\mathcal Q}}\nc\frQ{{\mathfrak Q}}\nc\frq{{\mathfrak q}}\nc\bbQ{{\mathbb Q}}\nc\bbq{{\mathbb q}}
\nc\bfr{{\boldsymbol r}}\nc\bfR{{\boldsymbol R}}\nc\cR{{\mathcal R}}\nc\frR{{\mathfrak R}}\nc\frr{{\mathfrak r}}\nc\bbR{{\mathbb R}}\nc\bbr{{\mathbb r}}
\nc\bfs{{\boldsymbol s}}\nc\bfS{{\boldsymbol S}}\nc\cS{{\mathcal S}}\nc\frS{{\mathfrak S}}\nc\frs{{\mathfrak s}}\nc\bbS{{\mathbb S}}\nc\bbs{{\mathbb s}}
\nc\bft{{\boldsymbol t}}\nc\bfT{{\boldsymbol T}}\nc\cT{{\mathcal T}}\nc\frT{{\mathfrak T}}\nc\frt{{\mathfrak t}}\nc\bbT{{\mathbb T}}\nc\bbt{{\mathbb t}}
\nc\bfu{{\boldsymbol u}}\nc\bfU{{\boldsymbol U}}\nc\cU{{\mathcal U}}\nc\frU{{\mathfrak U}}\nc\fru{{\mathfrak u}}\nc\bbU{{\mathbb U}}\nc\bbu{{\mathbb u}}
\nc\bfv{{\boldsymbol v}}\nc\bfV{{\boldsymbol V}}\nc\cV{{\mathcal V}}\nc\frV{{\mathfrak V}}\nc\frv{{\mathfrak v}}\nc\bbV{{\mathbb V}}\nc\bbv{{\mathbb v}}
\nc\bfw{{\boldsymbol w}}\nc\bfW{{\boldsymbol W}}\nc\cW{{\mathcal W}}\nc\frW{{\mathfrak W}}\nc\frw{{\mathfrak w}}\nc\bbW{{\mathbb W}}\nc\bbw{{\mathbb w}}
\nc\bfx{{\boldsymbol x}}\nc\bfX{{\boldsymbol X}}\nc\cX{{\mathcal X}}\nc\frX{{\mathfrak X}}\nc\frx{{\mathfrak x}}\nc\bbX{{\mathbb X}}\nc\bbx{{\mathbb x}}
\nc\bfy{{\boldsymbol y}}\nc\bfY{{\boldsymbol Y}}\nc\cY{{\mathcal Y}}\nc\frY{{\mathfrak Y}}\nc\fry{{\mathfrak y}}\nc\bbY{{\mathbb Y}}\nc\bby{{\mathbb y}}
\nc\bfz{{\boldsymbol z}}\nc\bfZ{{\boldsymbol Z}}\nc\cZ{{\mathcal Z}}\nc\frZ{{\mathfrak Z}}\nc\frz{{\mathfrak z}}\nc\bbZ{{\mathbb Z}}\nc\bbz{{\mathbb z}}

\DeclareMathOperator{\supp}{supp}
\DeclareMathOperator{\rank}{rank}

\newcommand{\remove}[1]{}

\newcommand{\colspan}{\mathrm{colspan}}

\newtheorem{theorem}{Theorem}
\newtheorem{definition}{Definition}
\newtheorem{lemma}[theorem]{Lemma}
\newtheorem{proposition}[theorem]{Proposition}
\newtheorem{corollary}[theorem]{Corollary}


\newcommand\ff{{\mathbb F}}

\newcommand\integers{{\mathbb Z}}
\newcommand\rationals{{\mathbb Q}}

\begin{document}
\sloppy

\title{Security in Locally Repairable Storage}
\author{Abhishek Agarwal  and Arya Mazumdar~\IEEEmembership{Senior Member,~IEEE}
\thanks{Abhishek Agarwal is with the Department of Electrical and Computer Engineering, University of Minnesota, Minneapolis, MN  55455, email: \texttt{abhiag@umn.edu.}}
\thanks{Arya Mazumdar is with the College of Information and  Computer Science, University of Massachusetts, Amherst, MA  01003, email: \texttt{arya@cs.umass.edu}. Part of this work has been done when the author was at University of Minnesota.}
 \thanks{This work was supported in part by NSF CCF 1318093, CCF 1453121, CCF 1642658. A preliminary version of this work was presented  in the  IEEE Information Theory Workshop, Jerusalem, Israel, 2015.}}
%
\allowdisplaybreaks
\maketitle

\begin{abstract}
In this paper we extend the notion of {\em locally repairable} codes to {\em secret sharing} schemes. 
The  main problem that we consider is to find  optimal ways to 
distribute shares of a secret 
among a set of storage-nodes (participants) such that the content of each node (share) 
can be recovered by using contents of only few other nodes, and at the same time 
the secret can be reconstructed by only some allowable subsets of nodes.
As a special case, an   eavesdropper observing 
some set of specific nodes (such as less than certain number of nodes) does not get any information. 
In other words, we propose to study
a locally repairable distributed storage system that is secure against a {\em passive eavesdropper} that can observe some subsets of
nodes. 

We provide a number of results related to such systems including upper-bounds and achievability results on the number of bits that can be securely stored with these constraints. 
In particular, we provide conditions under which a locally repairable code can be turned into a  secret sharing scheme and extend the results of
secure repairable storage to cooperative repair and storage on networks.
Additionally, we  consider perfect secret sharing schemes over general access structures under locality constraints and   give an example of a perfect secret sharing scheme
that can have small locality. Lastly,
 we provide a lower bound on the size of a share compared to the size of the secret that shows how locality affects the sizes of shares in a perfect scheme.

\end{abstract}


\section{Introduction}



Secret sharing schemes were proposed by Shamir and Blakley \cite{shamir1979share,blakley1899safeguarding} to provide security against an eavesdropper with unbounded computational capability. Consider the secret  as a realization of a (uniform) random vector $\bfS$ over some support. Define $[n] \defeq \set{1,2,\ldots,n}$ and let $2^A$ denote the power set for set $A$. Suppose that shares of the secret are to be distributed  among $n$ participants (storage nodes) such that a set of shares belonging to $\cA_s \subseteq 2^{[n]}$, is able to determine the secret. $\cA_s$ is called the access structure of the secret sharing scheme. Denote the random variable corresponding to the share of a participant (or node) $i\in [n]$ by $C_i$ and let $\bfC = (C_1 C_2 \ldots C_n)$. Let $\bfx_A$ denote the projection of the vector $\bfx \in \ff^n$ to the co-ordinates in $A\subseteq [n]$. For a singleton set $A =\set{i}$ let $\bfx_i \defeq \bfx_{\set{i}}$. A secure scheme has the property that a subset of shares in the block-list $\cB_s \subseteq 2^{[n]}$ are unable to determine anything about the secret. Thus, $H(\bfS|\bfC_B)=H(\bfS)$ for any $B\in \cB_s$ and $H(\bfS|\bfC_A)=0$ for any $A\in \cA_s$, where $H(\cdot)$ denotes the entropy\footnote{The unit of entropy in this paper is $q$-ary, where $q$ is an integer that will be clear from context.}. For a standard {\em monotone} secret sharing scheme the classes $\cA_s$ and $\cB_s$ must have the following properties,
\begin{gather*}
A^\prime \supseteq A, A \in \cA_s \implies A^\prime \in \cA_s \\
B^\prime \subseteq B, B \in \cB_s \implies B^\prime \in \cB_s \\
\mbox{ and } \\
\cB_s \subseteq 2^{[n]}\setminus\cA_s .
\end{gather*}
For a {\em perfect} secret sharing scheme we have the above monotone property and $\cB_s = 2^{[n]}\setminus \cA_s$. 
Perfect schemes for access structures of the form $\cA_s = \set{A \subseteq [n] : \abs{A}\geq m }$ are called {\em  threshold}  secret sharing schemes.
We refer to \cite{Beimel_secret_sharing_schemes} for a comprehensive survey of secret sharing schemes.


A convenient property of schemes that need to store data in a distributed storage system is local repairability \cite{gopalan2012locality} i.e. any storage node can be repaired by accessing a small subset of other nodes, much smaller than is required for decoding the complete data. Error-correcting codes with the local repair property -- locally repairable codes (LRC) -- have been the center of a lot of research activities lately \cite{gopalan2012locality,papailiopoulos2012locally,barg2013family, cadambe2013upper}. Consider  an $n$ length code over a $q$-ary alphabet, $\cC \subseteq \ff_q^n$ of size $|\cC| =q^k$. The code is said to have {\em locality} $r$, if for every $i$, $1 \leq i \leq n,$  there exists a set $\cR_i \subseteq [n]\setminus\{i\}$ with $|\cR_i|\le r$ such that for any two codewords $\bfu,\bfu^\prime \in \cC$ satisfying  $\bfu_i \ne \bfu^\prime_i,$ we have $\bfu_{\cR_i} \neq \bfu^\prime_{\cR_i}$. In a code with locality $r$, any symbol of a codeword can be deduced by reading only at most $r$ other symbols of the codeword. 
For application in distributed storage, the code is further required to have a large {\em minimum distance}  $d$, since that helps recovery in the event of a catastrophic failures (i.e., up to $d-1$ node failures). It is known that \cite{gopalan2012locality} for such a code,
  \begin{equation}\label{eq:locality}
  d \le n - k -\ceil{k/r} +2,
  \end{equation}
  which is also achievable \cite{papailiopoulos2012locally,barg2013family}. A $q$-ary code of length $n$, size $q^k$ and locality $r$ will be called an $(n,k,r)_q$-optimal LRC if it's minimum distance satisfies \eqref{eq:locality} with equality.
  
Security in distributed storage has recently been considered in a number of papers, for example \cite{shah2011information,pawar2011securing,tandonnew, goparaju2013data} and references therein. In these papers the main objective is to secure stored  or downloaded data against an adversary.  Threshold secret sharing protocols over a network under some communication constraint has been considered in \cite{shah2013secure}.
Problems most closely related to this paper perhaps appear in \cite{rawat2012optimal} where a version of threshold secret sharing scheme with locality has been studied.  
Motivated by the above applications in distributed storage, we analyze secret sharing schemes  with different access structures such that shares of each participant/node can be repaired with locality $r$. 

\subsection{Contributions and organization}\label{sec:contributions}

Our contributions in this paper are summarized in the following list.
\begin{enumerate}[leftmargin=*]
\item {\em Distributed storage.} We provide bounds and achievability results for a locally repairable scheme for access structure and block-list, $\cA_s = \set{A \subseteq [n] : \abs{A}\geq m }$ and $\cB_s = \set{B \subseteq [n] : \abs{B}\leq \ell }$, respectively. As evident from \cref{nklmr_secret_sharing}, this access and block structures model a simple distributed storage scenario. We assume that the shares of the secrets are locally recoverable and at the same time an adversary observing up to $\ell$ shares does not get any information.  A more general version of this model that also considers repair bandwidth as a parameter appears in \cite{rawat2012optimal}. In \cref{sec:converse_results} we also address the conditions under which a locally repairable error-correcting code can be converted into a secret sharing scheme with the above access structure. 

{\em Comparison of this part with results of  \cite{rawat2012optimal}:}  
In \cite{rawat2012optimal},  bounds on secrecy capacity for regenerating and locally recoverable  codes have been derived 
using information theoretic inequalities,
and achievability of
 these bounds using schemes that require Gabidulin precoding technique has been shown. 

Our method to prove the converse result is different from that used in \cite{rawat2012optimal}. 
One advantage of our technique for the bound in \cref{sec:converse_results} is that it can be easily applied to cooperative repair (\cref{sec:schemes_for_co_operative_repair}) and repairable codes on graphs (\cref{sec:security_for_repairable_codes_on_graphs}).

We provide a random coding argument using network flow graphs to show the existence of an achievability scheme for the bound, and also adapt the method of \cite{rawat2012optimal} for more general scenarios mentioned above (i.e., cooperative repair and repairable codes on graphs). For these scenarios, we use \cref{lemma1} and Gabidulin precoding to construct transformations to form secure schemes from existing non-secure locally repairable codes.



\item {\em Maximal recoverability.} The Gabidulin precoding described above can  be used to construct optimal codes  but requires an exponentially large (in $n$) alphabet size. 
A simple construction of secret sharing schemes from LRCs is provided in \cref{construction_lin_code}.
We specify in \cref{lemma1} the additional constraints that an optimal LRC would have to satisfy to be able to construct optimal secret sharing schemes in this method. This shows  that to construct an optimal secure scheme with small share size we essentially need a {\em maximally recoverable
code} over small alphabet (see \cref{secure_tamo_barg}). 
\item {\em Perfect secret sharing with small locality.} In \cref{sec:largest_share_size}, we  consider perfect secret sharing schemes over general access structures under locality constraints. While we show that for  threshold secret sharing schemes, there cannot exist any non-trivial local repairability, we  give an example of a perfect secret sharing scheme that can have  small locality. 
\item   {\em Lower bound on the size of shares in terms of the size of the secret.} Furthering the result of \cite{Csirmaz1997} to locally repairable schemes we provide an analogous lower-bound on the size of a share compared to the size of the secret. We further show how locality effects the sizes of shares in a perfect scheme as they relate to the size of the secret. These results are presented in \cref{sec:largest_share_size} (see \cref{thm:size}).
\item   {\em Extension.} We extend the notion of security to cooperative local repair \cite{rawat2014cooperativelocal} where a Distributed Storage System can deal with simultaneous multiple node failures. 
We provide upper-bounds on the secrecy capacity and construct achievable schemes for this scenario in \cref{sec:schemes_for_co_operative_repair}.
\item{\em Extension.} A different and practical generalization for  secret sharing scheme is made in which the Distributed Storage System is represented by a graph $\cG$ such that a node can only connect to its neighbors in $\cG$ for repair. This scenario has been considered in \cref{sec:security_for_repairable_codes_on_graphs}.
\end{enumerate}

\section{A secret-sharing scheme for distributed storage}\label{sec:converse_results}
We start this section by formally defining a secret sharing scheme for a particular, common access structure and block-list: $\cA_s = \set{A \subseteq [n] : \abs{A}\geq m }$ and $\cB_s = \set{B \subseteq [n] : \abs{B}\leq \ell }$. For a code $\cC \subset \ff_q^n$ and set $I \subset [n]$ define $\cC_I \defeq \set{\bfx_I \in \ff_q^{\abs{I}} : \bfx \in \cC}$.

\begin{definition}\label{nklmr_secret_sharing}
An $(n, k, \ell, m, r)_q$-secret sharing scheme consists of a randomized encoder $f$ that maps a uniform secret $\bfS \in \ff_q^k$
randomly to $\bfC  = f(\bfS) \in \ff_q^n$, and must have the following three properties. 
\begin{enumerate}
\item (Recovery) Given any $m$ symbols of $\bfC$, the secret $\bfS$ is completely determined. This guarantees that the
secret is recoverable even with the loss   of any  $n-m$ shares.
\begin{equation}\label{cond1}
H(\bfS| \bfC_I) = 0, \; \forall I \subseteq [n], \abs{I} = m
\end{equation}
\item (Security) Any set of $\ell$ shares  of $\bfC$ does not reveal anything about the secret.
\begin{equation}\label{cond2}
H(\bfS | \bfC_J) = H(\bfS), \; \forall J \subseteq [n], \abs{J} = \ell
\end{equation}
A scheme satisfying this condition is called $\ell$-secure. An eavesdropper that can observe $\ell$ nodes is called an $\ell$-strength eavesdropper.
\item (Locality) For any share, there exist at most $r$ other shares that completely determine this.
For all $i$, there exists $\cR_i\subseteq [n]\setminus \{i\}: |\cR_i| \le r$, such that 
\begin{equation}\label{cond3}
H(\bfC_i | \bfC_{\cR_i}) = 0
\end{equation}
 $\cR_i$ is called the recovery set of share $i$. 
\end{enumerate}
\end{definition}
The maximum amount of secret that can be stored as a function of $n, \ell, m$ and $r$ is called the capacity of the secret sharing scheme and in the following we provide exact characterization of this quantity.
We can define the security condition above in a modified way where the eavesdropper is allowed to see any set $J\subseteq [n]$ of shares and we calculate the amount of information revealed, i.e. $I(\bfS ; \bfC_J)$, in terms of $n, k, |J|, m $ and $r$ in an optimal scheme. This extension is easy from our result and somewhat summarized in \cref{cor:grad}.  


Note that, for locally repairable schemes with no security requirement i.e. $\ell=0$ the following lower-bound on $m$
is apparent from \eqref{eq:locality},
\begin{equation}\label{eq:locality1}
m \ge k +\ceil{k/r} -1,
\end{equation}
This lower bound follows from the definition of the minimum distance of a code $d =n-m+1$. In the subsequent, we provide the fundamental limit on secrecy capacity and constructions achieving that limit.  

As mentioned in the introduction,  a generalized version of this type of secret-sharing scheme that include repair-bandwidth and other  parameters was studied in \cite{rawat2012optimal}. Our \cref{easy_upper_bound,thm:achievability} can be obtained as a consequence of results of that paper. We still provide different proofs of these results as the concepts introduced will be useful for later developments.


\subsection{Bounds}
Let us first prove an immediate and naive upper bound on the capacity of a locally repairable secret sharing scheme that follows as a consequence of 
Eq.~\eqref{eq:locality1}.
\begin{proposition}
For any $(n,k,\ell, m, r)_q$-secret sharing scheme,
$$ k \leq m-\ell - \floor*{\frac{m-\ell}{r+1}}$$
\end{proposition}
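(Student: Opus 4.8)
The plan is to bypass the encoder $f$ entirely and argue with entropies, combining two ingredients: (i) a \emph{padding trick} that upgrades the $m$-recovery guarantee \eqref{cond1} into an effective $(m-\ell)$-recovery guarantee once the $\ell$ observed coordinates are discarded via security \eqref{cond2}, and (ii) a \emph{greedy set-building argument} that produces a set of $m-\ell$ coordinates whose joint entropy is small, because locality \eqref{cond3} forces several local dependencies to lie inside it. This is essentially the entropy version of the argument behind \eqref{eq:locality1}, applied after the $\ell$ ``insecure'' coordinates have been peeled off.

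First I would build a coordinate set $W\subseteq[n]$ with $|W|=m-\ell$ greedily. Start from $W=\emptyset$; while $m-\ell-|W|\ge r+1$, pick any $i\notin W$ (possible since $|W|<m-\ell\le n$), let $g$ be the number of coordinates in $(\{i\}\cup\cR_i)\setminus W$, adjoin them to $W$, and call this a \emph{local step}. Using $H(\bfC_i\mid\bfC_{\cR_i})=0$, the chain rule, and the fact that conditioning cannot increase entropy, one checks that a local step raises $H(\bfC_W)$ by at most $g-1$ ($q$-ary units) while raising $|W|$ by exactly $g$, with $1\le g\le r+1$. When the loop stops, $0\le m-\ell-|W|\le r$, and I finish by adjoining that many arbitrary fresh coordinates, each adding at most $1$ to the entropy. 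Since every local step adds at most $r+1$ coordinates, and local steps together contribute at least $m-\ell-r$ coordinates, the number $t$ of local steps satisfies $t\ge\lfloor(m-\ell)/(r+1)\rfloor$. Summing the per-step increments, $H(\bfC_W)\le(m-\ell)-t\le(m-\ell)-\lfloor(m-\ell)/(r+1)\rfloor$.

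Next, assuming $n\ge m$ (otherwise \eqref{cond1} is vacuous), choose any $J\subseteq[n]\setminus W$ with $|J|=\ell$ and put $I=W\cup J$, so $|I|=m$. Then \eqref{cond1} gives $H(\bfS\mid\bfC_W,\bfC_J)=H(\bfS\mid\bfC_I)=0$, so by the chain rule $H(\bfS\mid\bfC_J)\le H(\bfS,\bfC_W\mid\bfC_J)=H(\bfC_W\mid\bfC_J)+H(\bfS\mid\bfC_W,\bfC_J)=H(\bfC_W\mid\bfC_J)\le H(\bfC_W)$, while \eqref{cond2} gives $H(\bfS\mid\bfC_J)=H(\bfS)=k$. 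Combining, $k\le H(\bfC_W)\le(m-\ell)-\lfloor(m-\ell)/(r+1)\rfloor$, which is the claim.

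I expect the only delicate point to be the integrality bookkeeping in the greedy construction: one must halt the local steps as soon as fewer than $r+1$ slots remain (so the overshoot past the target size $m-\ell$ is bounded by $r$ and can be padded trivially), and one must be content with the floor taken at $m-\ell$ rather than at $m$ — the $\ell$ coordinates ``spent'' on the padding set $J$ genuinely cannot be bundled into local groups in the worst case, which is precisely why this is only the naive bound and why the relevant specialization is \eqref{eq:locality1} with $m$ replaced by $m-\ell$.
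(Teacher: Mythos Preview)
Your proof is correct, but it follows a genuinely different route from the paper's.

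The paper argues combinatorially: it fixes an $\ell$-set $I$, extracts from the supports $\supp(f(\bfs'))$ a deterministic code $\cC$ of size $q^k$ whose codewords all agree on $I$, observes that $\cC$ inherits locality $r$ and has minimum distance at least $n-m+1$, and then applies the LRC bound \eqref{eq:locality} as a black box to the punctured code $\cC_{[n]\setminus I}$ of length $n-\ell$. Your proof, by contrast, never leaves the random variables $\bfC_i$: you reprove the entropy version of the locality bound inline via the greedy set-building, and then couple it to security and recovery through the padding set $J$.

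What each approach buys: the paper's argument is shorter once \eqref{eq:locality} is in hand, and it makes transparent that this proposition is literally ``\eqref{eq:locality1} applied to a shortened code''. Your argument is self-contained and, more importantly, it is essentially the machinery behind the stronger \cref{easy_upper_bound}: if instead of stopping your greedy construction at size $m-\ell$ and padding externally with $J$, you run it to size $m$ and choose the $\ell$-set $\cL$ \emph{inside} the resulting recovery core, you recover exactly the improvement $k+\ell\le m-\lfloor m/(r+1)\rfloor$. You already flag this in your last paragraph, and it is precisely the point the paper makes when it abandons the black-box approach for \cref{easy_upper_bound}. One small bookkeeping remark: your lower bound $t\ge\lfloor(m-\ell)/(r+1)\rfloor$ is justified, but the clean way to see it is that on loop exit $|W|\ge m-\ell-r$, so $t(r+1)\ge m-\ell-r$, and then integrality of $t$ gives $t\ge\lceil(m-\ell-r)/(r+1)\rceil=\lfloor(m-\ell)/(r+1)\rfloor$.
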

\begin{proof}
Consider the randomized encoding $f$ of any $(n,k,\ell, m, r)$-secret sharing scheme. For any secret $\bfs \in \ff_q^k$, define the support of the map $f(\bfs)$ to be  $ \supp(f(\bfs))= \set{\bfx \in \ff_q^n: \Pr(f(\bfs) = \bfx) \ne 0}$. Clearly for any pair $\bfs,\bfs' \in \ff_q^k$ $\bfs\ne \bfs^\prime$, $\supp(f(\bfs)) \cap \supp(f(\bfs')) = \emptyset$.

Suppose, for some $\bfs \in \ff_q^k$, $\bfx \in \supp(f(\bfs))$. Let $I \subseteq [n]$ and $|I| = \ell$. Note that, for each  $\bfs'\in \ff_q^k\setminus \bfs$, there must exist $\bfz \in \supp(f(\bfs'))$ such that $\bfz_I= \bfx_I$ (from the Security property). Let $\cC \subseteq \{\bfz \in \supp(f(\bfs')): \bfs' \in \ff_q^k \text{ and } \bfz_I= \bfx_I\}$ such that $|\cC \cap\supp(f(\bfs'))| =1 \forall \bfs' \in \ff_q^k$. We have $\cC \subseteq \ff_q^n$ and $|\cC| = q^k$. Moreover, from the Recovery property, any $m$ coordinates of a vector in  $\cC$ must be unique, which implies $\cC$ has minimum distance at least $n-m +1$.

Since $\set{f(\bfs):\bfs\in \ff_q^k}$ has locality $r$ any set $\cC\subset \set{f(\bfs):\bfs\in \ff_q^k}$ must have locality $r$. Since, all the codewords in $\cC$ have fixed value on the co-ordinates $I$, $\cC_{[n]\setminus I} \in \ff_q^{n-\ell}$ must be a code of length $n-\ell$ and locality $r$. Moreover, $\cC_{[n]\setminus I}$ has minimum distance at least $n-m +1$ (same as $\cC$). Now from eq.~\eqref{eq:locality} we have,
\begin{subequations}\label{weak_bound}
\begin{align}
&& n-m +1 &\le (n-\ell)-k  - \ceil{k/r} +2 \nonumber\\
&\iff & k + \ceil{k/r} -1 &\le m-\ell \label{weak_bound_f1}\\
&\iff & k &\leq m-\ell - \floor*{\frac{m-\ell}{r+1}} \label{weak_bound_f2}
\end{align}
\end{subequations}
where \cref{weak_bound_f2} follows by replacing both sides of \cref{weak_bound_f1} by $Incr_0(k + \ceil{k/r} -1)$ and $Incr_0(m-\ell)$ respectively,  where $Incr_0(.)$ denotes the increasing function $Incr_0(x) \defeq x-\floor*{\frac{x}{r+1}}$.
\end{proof}

This naive bound in \cref{weak_bound_f1} is not the best possible: it can be further improved to
\begin{equation}\label{eq:trivial}
	k+\ell+\ceil*{\frac{k+\ell}{r}}-1 \leq m.
\end{equation}
To prove \eqref{eq:trivial}, instead of trying to use eq.~\eqref{eq:locality} as a black-box, we follow its proof method \cite{gopalan2012locality,cadambe2013upper}.

\begin{theorem}\label{easy_upper_bound}
Any $(n,k, \ell,m,r)_q$-secret sharing scheme must satisfy,
\begin{equation}\label{easy_bound}
k +\ell \leq m  - \floor*{\frac{m}{r+1}}.
\end{equation}
\end{theorem}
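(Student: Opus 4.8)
The plan is to prove the (equivalent, and slightly more convenient) inequality
\[
  k+\ell+\ceil*{\tfrac{k+\ell}{r}}-1\ \le\ m ,
\]
which yields \eqref{easy_bound} on applying the nondecreasing map $Incr_0(x)=x-\floor*{x/(r+1)}$ to both sides, since $Incr_0\!\big(k+\ell+\ceil*{(k+\ell)/r}-1\big)=k+\ell$ and $Incr_0(m)=m-\floor*{m/(r+1)}$ (the same $Incr_0$ step as in the preceding proposition). To prove that inequality I would not use \eqref{eq:locality} as a black box but imitate its proof from \cite{gopalan2012locality,cadambe2013upper}, run on the entropy function $T\mapsto H(\bfC_T)$ instead of on the column-rank of a generator matrix. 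It suffices to construct a set $S\subseteq[n]$ with $H(\bfS\mid\bfC_S)>0$ and $|S|\ge (k+\ell)+\ceil*{(k+\ell)/r}-2$: if $|S|\ge m$ then $S$ contains an $m$-subset $I$, and \eqref{cond1} forces $H(\bfS\mid\bfC_S)\le H(\bfS\mid\bfC_I)=0$, a contradiction, so $|S|\le m-1$ and we are done.

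To build $S$, first fix an $\ell$-subset $J_0$ of full weight, $H(\bfC_{J_0})=\ell$ (such a subset exists: an $\ell$-subset leaks nothing about $\bfS$ while an $m$-subset reconstructs it and $m>\ell$ when $k\ge1$, so the scheme cannot have internal redundancy inside every $\ell$-subset). By \eqref{cond2}, $\bfC_{J_0}$ is independent of $\bfS$, so $H(\bfC_S\mid\bfS)\ge H(\bfC_{J_0})=\ell$ for every $S\supseteq J_0$, whence $I(\bfS;\bfC_S)\le H(\bfC_S)-\ell$ and in particular $H(\bfC_S)<k+\ell\Rightarrow H(\bfS\mid\bfC_S)>0$. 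I then grow $S$ keeping $J_0\subseteq S$ and $H(\bfC_S)<k+\ell$. While $H(\bfC_S)<k+\ell$, the whole vector $\bfC$ is not determined by $\bfC_S$ (by \eqref{cond1}, $H(\bfC)\ge H(\bfS,\bfC_{J_0})=k+\ell$), so some $i\notin S$ has $H(C_i\mid\bfC_S)>0$; absorb its recovery set, $S\leftarrow S\cup\{i\}\cup\cR_i$. By \eqref{cond3}, $C_i$ is a function of $\bfC_{\cR_i}$, so each such step adds at most $r+1$ coordinates but raises $H(\bfC_S)$ by at most (number of new coordinates)$-1\le r$. Once no recovery set can be absorbed without $H(\bfC_S)$ reaching $k+\ell$, finish by absorbing single fresh coordinates (each $+1$ to $|S|$ and $\le1$ to $H(\bfC_S)$) until $H(\bfC_S)$ sits just below $k+\ell$. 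Summing the increments --- one unit of $|S|$ in excess of the $H(\bfC_S)$-gain per absorbed recovery set, with at least $\ceil*{(k+\ell)/r}-1$ absorptions forced, together with the $\ell$ seed coordinates and the single-coordinate top-off --- gives $|S|\ge (k+\ell)+\ceil*{(k+\ell)/r}-2$.

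The step I expect to be the actual work is this tally, specifically making $\ell$ land \emph{inside} the ceiling rather than as an additive term: if one credits the $\ell$ coordinates of $J_0$ at rate $1:1$, the count only reproduces the preceding proposition's bound $m\ge k+\ell+\ceil*{k/r}-1$, so one must absorb the coordinates of $J_0$ through their own recovery sets, i.e.\ charge the security dimension $\ell$ at the same ``$r$ units of entropy per $(r+1)$ coordinates'' rate as the secret dimension $k$; that is exactly what turns $\ceil*{k/r}$ into $\ceil*{(k+\ell)/r}$. The remaining points are routine and parallel \cite{gopalan2012locality,cadambe2013upper}: the ceiling boundary cases (e.g.\ $r\mid(k+\ell)$) are handled by the single-coordinate top-off, and the non-integrality of the conditional entropies involved is handled by adding coordinates one at a time near the threshold.
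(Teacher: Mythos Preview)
Your bottom-up strategy is a natural dual to the paper's top-down argument, but as written it has a real gap, and it is not merely a matter of bookkeeping.

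The crucial step is your claim that some $\ell$-subset $J_0$ with $H(\bfC_{J_0})=\ell$ exists. Your justification (``the scheme cannot have internal redundancy inside every $\ell$-subset'') is not valid: nothing in Definition~\ref{nklmr_secret_sharing} forces any share to be uniform. For instance, take any optimal $(n,k,\ell,m,r)_{q'}$-scheme with $q'<q$ and view it over $\ff_q$; then $H(C_i)<1$ for every $i$, hence $H(\bfC_L)<\ell$ for \emph{every} $\ell$-subset $L$, while \eqref{cond1}--\eqref{cond3} still hold. Without $H(\bfC_{J_0})=\ell$, your key implication ``$J_0\subseteq S$ and $H(\bfC_S)<k+\ell\Rightarrow H(\bfS\mid\bfC_S)>0$'' weakens to the threshold $k+H(\bfC_{J_0})$, and the absorption count you extract drops from $\ceil*{(k+\ell)/r}-1$ to $\ceil*{(k+H(\bfC_{J_0}))/r}-1$. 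In other words, your own diagnosis is right: seeded this way you only recover the earlier proposition's bound, and the ``absorb $J_0$ through its recovery sets'' fix, while it does push $\ell$ inside the ceiling \emph{if} $H(\bfC_{J_0})=\ell$, still rests on that unjustified assumption. More generally, the underlying difficulty is that $\bfC$ is not a code of minimum distance $n-m+1$ (only $\bfS$, not all of $\bfC$, is determined by $m$ coordinates), so one cannot simply run the Gopalan/Cadambe--Mazumdar argument on the ``dimension-$(k+\ell)$ code'' $\bfC$.

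The paper's proof sidesteps this entirely by going top-down. It greedily packs closed recovery sets $\Lambda_i=\{i\}\cup\cR_i$ into an $m$-set $\cM$, obtaining $\nu\ge\floor*{m/(r+1)}$ of them; removing the $\nu$ pivot indices yields $\cM'$ with $|\cM'|\le m-\floor*{m/(r+1)}$ and $H(\bfC_{\cM}\mid\bfC_{\cM'})=0$, hence $H(\bfS\mid\bfC_{\cM'})=0$. Only \emph{then} is security invoked, on an arbitrary $\ell$-subset $\cL\subseteq\cM'$: \eqref{cond2} gives $k=H(\bfS\mid\bfC_\cL)\le H(\bfC_{\cM'\setminus\cL}\mid\bfC_\cL)\le|\cM'|-\ell$, and the bound follows. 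Note that this uses only $H(\bfS\mid\bfC_\cL)=H(\bfS)$, never any assumption on $H(\bfC_\cL)$; the locality accounting is done purely in terms of $m$, and security is applied once at the end. If you want to rescue the bottom-up route, the cleanest path is to mimic this: build the $m$-set first (so the absorption count is tied to $m$, not to $k+\ell$), and invoke \eqref{cond2} only after passing to the reduced set.
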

The upper-bound in \cref{easy_bound} can  also be obtained from \cite[Theorem~33]{rawat2012optimal} where the authors use a different method. 
It should be noted that \cref{easy_bound} is equivalent to \cref{eq:trivial}. We see that \cref{eq:trivial} $\implies$ \cref{easy_bound} by replacing both sides in \cref{easy_bound} by the increasing function $Incr_0(x) \defeq x-\floor{x/(r+1)}$. Similarly \cref{easy_bound} $\implies$ \cref{eq:trivial} by replacing each side with the increasing function $Incr_1(x) \defeq x+\ceil{x/r} - 1$. This follows because of the following fact,
{\claim 
For $x,y,r \in \integers^+$,
\begin{equation}\label{eq_rev1}
	y=x + \ceil*{\frac{x}{r}}-1 \iff x = y- \floor*{\frac{y}{r+1}}
\end{equation}}
\begin{IEEEproof}
Let $x=q r + w, \;w <r$. Then, we have.
\begin{subequations}\label{two_forms_UB}
\begin{align}
	& {x + \ceil*{\frac{x}{r}}-1}- \floor*{\frac{{x + \ceil*{\frac{x}{r}}-1}}{r+1}} \label{eq_rev1_2}\\
	&= x + q + \ceil*{\frac{w}{r}} -1 - \floor*{\frac{qr+w+q+\ceil*{\frac{w}{r}}-1}{r+1}}\\
	&= x  +   \ceil*{\frac{w}{r}} - \floor*{\frac{w+\ceil*{\frac{w}{r}}-1}{r+1}} -1 	\\
	&= x\label{eq_rev1_1}
\end{align}
\end{subequations}	
where \cref{eq_rev1_1} follows since $\ceil*{\frac{w}{r}} - \floor*{\frac{w+\ceil*{\frac{w}{r}}-1}{r+1}} -1 =0$ for $w \in [0,r-1]$. Now, substituting $y = x + \ceil*{\frac{x}{r}}-1$ in \cref{eq_rev1_2} we have, \cref{eq_rev1}.
\end{IEEEproof}

\begin{IEEEproof}[Proof of \cref{easy_upper_bound}]
Let $\Lambda_i = \set{\cR_i \cup \set{i}}$. Recall that we can recover the secret $\bfS$ from any $m$ symbols in the $n$ length word $f(\bfS) = \bfC$.  We construct an $m$-subset $\cM \subseteq [n]$ such that  $|\{i: \Lambda_i \subseteq \cM\}|$ is maximized. Suppose, $\displaystyle \cM' = \bigcup_{i:\Lambda_i \subseteq \cM} \cR_i $. 

We have $H(\bfC_\cM | \bfC_{\cM^\prime})=0$. Moreover $H(\bfS|\bfC_\cM)=0$. 
This implies, 
\[H(\bfS|\bfC_{\cM^\prime}) = 0.\]

Now we can select any $\ell$-subset ${\cL}$ of $\cM^\prime$ and assume that the eavesdropper observes that set. Therefore, $H(\bfS) = H(\bfS|\bfC_{\cL})$ must be less than or equal to the number of symbols in $\cM^\prime \setminus \cL$.  Formally,
\begin{align}\label{entropy_manip}
k = H(\bfS) = H(\bfS|\bfC_{\cL}) \leq H(\bfC_{\cM^\prime}|\bfC_{\cL})& \leq \abs{\cM^\prime\setminus \cL}\nonumber\\
& = |\cM'| -\ell.
\end{align}

 This observation will lead   us to \cref{easy_bound}. We describe below, the only remaining task: the method for constructing the set $\cM$ described above, and show that it gives us \cref{easy_bound}. The construction for $\cM$ is given in \cref{algo:conver_result}.

\begin{algorithm}
 \KwData{$\cR_i$ for all $i$}
 \KwResult{$\cM \subseteq [n], \abs{\cM}=m$ containing at least $\floor{m/(r+1)}$ recovery sets }
 $j=0$;
 $\cM^j = \emptyset$\\
 choose any $t \in [n]$\\
 \While{$\abs{\cM^j \cup \set{\Lambda_t}} < m$}{
 $\cM^{j+1} = \cM^j \cup {\Lambda_t}$\\
 choose $t \notin \cM^{j+1}$\\
	$j = j+1$\\
 }
	\eIf{$\abs{\cM^j \cup \Lambda_t} \leq m$}{
		$\cM^{j+1}= \cM^j \cup {\Lambda_t}$\\
	 }{
		$\cI = $ any $(m - \abs{\cM^j})$-subset of $[n] \setminus\cM^j$\\
	 $\cM^{j+1}=\cM^j \cup \cI$
	}
	$j=j+1$\\
	$\cM=\cM^{j}$
 \caption{Constructing a set $\cM\subseteq \{1,2,\dots, n\}$ to maximize $\abs{\set{i: \Lambda_i \subseteq \cM}}$}\label{algo:conver_result}
\end{algorithm}

Note that \cref{algo:conver_result} may not actually give the set containing the maximum number of $\Lambda_i$ but it would suffice to prove the  bound in \cref{easy_bound}. Let $\nu$ denote number of  sets $\Lambda_i$ added to $\cM^0$. We have, $\abs{\Lambda_i} \leq r+1, \forall i$. 
So the maximum size of the set added in each step is $r+1$. Since $\abs{\cM}=m$ by construction, when the algorithm ends at line $9$ we have $\nu \geq \ceil*{\frac{m}{r+1}}$. If the algorithm ends at line $10$ we  must have, $\nu \geq \floor*{\frac{m}{r+1}}$.  
Evidently we have constructed a set $\cM$ such that $|\cM'| = |\cM| - \nu \le m - \floor*{\frac{m}{r+1}}$.
From \cref{entropy_manip} we have,
\begin{align}\label{alg_result}
 k  \leq m  - \floor*{\frac{m}{r+1}} - \ell.
\end{align}
\end{IEEEproof}

Using \cref{entropy_manip} we can show the following,
\begin{corollary} \label{cor:grad}
There exists a set $J \subseteq [n]$ with $\ell \leq \abs{J} \leq m - \floor*{\frac{m}{r+1}}$ such that,
\begin{equation}\label{gradual_degradation}
	H(\bfS|\bfC_{J}) \leq m - \floor*{\frac{m}{r+1}} - \abs{J}.
\end{equation}
\end{corollary}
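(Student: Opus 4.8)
The plan is to recycle the set $\cM'$ already built inside the proof of \cref{easy_upper_bound}. That proof produces, via \cref{algo:conver_result}, an $m$-subset $\cM\subseteq[n]$ and $\cM'=\bigcup_{i:\Lambda_i\subseteq\cM}\cR_i$ with two properties that are all I need here: (i) $H(\bfS\mid\bfC_{\cM'})=0$, and (ii) $|\cM'|\le m-\floor*{\frac{m}{r+1}}$. The corollary then drops out of the same entropy manipulation as in \cref{entropy_manip}, applied to a $J$ chosen in relation to $\cM'$ rather than to a fixed $\ell$-subset.

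First I would treat the main case $|\cM'|\ge\ell$, and in fact prove the slightly stronger ``gradual degradation'' statement that \emph{every} $J\subseteq\cM'$ with $\ell\le|J|\le|\cM'|$ works. For such a $J$, property (ii) already gives $\ell\le|J|\le m-\floor*{\frac{m}{r+1}}$, and, copying \cref{entropy_manip} almost verbatim,
\[
H(\bfS\mid\bfC_J)\;\le\;H(\bfC_{\cM'}\mid\bfC_J)\;\le\;|\cM'\setminus J|\;=\;|\cM'|-|J|\;\le\;m-\floor*{\frac{m}{r+1}}-|J|,
\]
where the first inequality uses (i) (so $\bfC_{\cM'}$ determines $\bfS$), the second bounds the conditional entropy of the $|\cM'\setminus J|$ unobserved symbols of $\cM'$ by their number, and the last is (ii). Any choice with $|J|\in[\ell,|\cM'|]$ then finishes this case; since $I(\bfS;\bfC_J)=k-H(\bfS\mid\bfC_J)\ge k+|J|-\bigl(m-\floor*{\frac{m}{r+1}}\bigr)$, this also records that leakage grows by at least one per extra observed node.

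It then remains to handle the degenerate case $|\cM'|<\ell$. Here I would take $J=\cM'\cup\cK$, with $\cK$ any $(\ell-|\cM'|)$-subset of $[n]\setminus\cM'$ (which exists since $\ell\le n$), so $|J|=\ell$. Because $\cM'\subseteq J$, monotonicity of conditioning together with (i) gives $H(\bfS\mid\bfC_J)\le H(\bfS\mid\bfC_{\cM'})=0$, and the size bound $\ell\le|J|=\ell\le m-\floor*{\frac{m}{r+1}}$ holds because \cref{easy_upper_bound} gives $k+\ell\le m-\floor*{\frac{m}{r+1}}$ and $k\ge0$; hence $H(\bfS\mid\bfC_J)=0\le m-\floor*{\frac{m}{r+1}}-|J|$, as required.

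I expect no genuine obstacle: the entire content beyond re-running \cref{entropy_manip} is the bookkeeping that the size window $\bigl[\ell,\ m-\floor*{\frac{m}{r+1}}\bigr]$ is nonempty and that the constructed $J$ lands in it, which is exactly what invoking \cref{easy_upper_bound} (for $\ell\le m-\floor*{\frac{m}{r+1}}$) and the case split on $|\cM'|$ versus $\ell$ take care of.
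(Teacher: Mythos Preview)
Your proposal is correct and is exactly what the paper intends: the paper's entire proof is the single line ``Using \cref{entropy_manip} we can show the following,'' and you have faithfully unpacked that by reusing the set $\cM'$ from \cref{algo:conver_result} together with the chain $H(\bfS\mid\bfC_J)\le H(\bfC_{\cM'}\mid\bfC_J)\le|\cM'|-|J|$. The case split on $|\cM'|\gtrless\ell$ is an appropriate bit of extra care (the paper does not bother, since in any nondegenerate scheme $|\cM'|\ge k+\ell\ge\ell$ by \cref{entropy_manip} itself).
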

\Cref{gradual_degradation} gives an upper-bound on the maximum ambiguity of the secret of an $(n,k,\ell,m,r)$-scheme when the eavesdropper has access to more than $\ell$ shares.

\subsection{Constructions}
\label{sec:constr}
It is possible to show matching achievability results to \cref{easy_upper_bound} by
a number of different methods.
\begin{theorem}\label{thm:achievability}
There exists a $(n,k,\ell, m,r)$-secret sharing scheme such that \cref{eq:trivial} is satisfied with equality.
\end{theorem}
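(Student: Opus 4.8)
The plan is to realize the scheme as a linear code read off a random linear network code on a suitably chosen information-flow graph, mirroring the cut-set machinery used for regenerating and locally repairable codes; I would also keep the Gabidulin-precoding route in reserve. First fix $m \defeq (k+\ell)+\ceil{(k+\ell)/r}-1$, so that \cref{eq:trivial} holds with equality, and build a graph $\cG$: a source $s$ carrying $k+\ell$ symbols, of which $k$ are the secret $\bfS$ and $\ell$ are i.i.d.\ uniform keys $\bfR$; for each of the $\ceil{n/(r+1)}$ (disjoint) local groups a relay $v_g$ with an edge $s\to v_g$ of capacity $r$ and unit-capacity edges $v_g\to j$ to the $\le r+1$ storage nodes $j$ in that group; a data collector $DC_I$ joined by infinite-capacity edges to the nodes of each $m$-subset $I\subseteq[n]$; and an eavesdropper $E_J$ joined to the nodes of each $\ell$-subset $J$. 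The capacity-$r$ bottleneck into each group is exactly what forces a closed $(r+1)$-group to carry only $r$ fresh symbols, and hence what will produce locality $r$.

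Next I would compute the worst-case source-to-$DC$ min-cut. If an $m$-subset touches $a_g$ nodes of group $g$ (so $\sum_g a_g=m$, $a_g\le r+1$), that group contributes $\min(r,a_g)$ to any cut; minimizing over all $m$-subsets (the adversarial choice packs nodes into full groups) gives $m-\floor{m/(r+1)}$, which by the Claim in \eqref{eq_rev1} equals $k+\ell$ for our choice of $m$. By the multicast theorem of network coding, for $q$ large enough there is a single linear network code on $\cG$ under which every $DC_I$ simultaneously recovers all $k+\ell$ source symbols; restricting the resulting symbols to the $n$ storage nodes gives a length-$n$ linear code $\cC$ in which every storage node is a generic linear combination of the $r$ fresh symbols of its group — hence locality $r$ \eqref{cond3} — and from any $m$ coordinates the whole source, in particular $\bfS$, is recoverable \eqref{cond1}.

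For the secrecy condition \eqref{cond2} I would invoke the coset-coding / secure-network-coding argument: since each $E_J$ observes only $\ell$ symbols and the source is padded with $\ell$ independent uniform keys, a generic choice of the coding coefficients — equivalently, precoding $(\bfS,\bfR)$ by a matrix so that for every $\ell$-set $J$ the key-block of the effective generator restricted to $J$ is invertible — makes $\bfC_J$ uniform over $\ff_q^\ell$ and independent of $\bfS$, i.e.\ $H(\bfS\mid \bfC_J)=H(\bfS)$. Combining the three properties yields an $(n,k,\ell,m,r)_q$-scheme meeting \cref{eq:trivial} with equality. As an alternative I would note that one may instead precode $(\bfS,\bfR)$ with a Gabidulin (MRD) code over an extension field $\ff_{q^N}$ with $N$ large and feed the result into any $(n,k+\ell,r)$-optimal LRC as in \cref{construction_lin_code} and \cref{lemma1}; the rank-metric property then guarantees recovery from any $m$ coordinates and secrecy against any $\ell$ at once, at the cost of an alphabet exponential in $n$.

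The main obstacle is establishing secrecy against \emph{every} $\ell$-subset with a single code: the multicast theorem handles all data collectors simultaneously, but the "secure rate $=$ min-cut $-$ wiretap observation size" guarantee requires the coding coefficients to avoid a whole family of bad events — one invertibility/full-rank condition per $\ell$-subset — so one must argue, via a union bound over the $\binom{n}{\ell}$ subsets together with a Schwartz--Zippel estimate, that a sufficiently large finite field admits a choice that is good for all of them at once. A secondary point to be careful about is tightness of the flow model: one must check that the code actually read off the storage nodes genuinely has locality $r$ (not merely that it meets the cut bound), which follows from the capacity-$r$ group bottlenecks and genericity but should be spelled out, together with the minor bookkeeping when $(r+1)\nmid n$.
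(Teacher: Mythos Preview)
Your proposal is correct and follows essentially the same route as the paper: a random linear network code on an information-flow graph with capacity-$r$ relays per local group, a min-cut calculation giving $m-\floor{m/(r+1)}=k+\ell$ for every data collector, and then a full-rank/Schwartz--Zippel (equivalently secure-network-coding) argument to handle all $\binom{n}{\ell}$ eavesdroppers simultaneously; you also note the Gabidulin-precoding alternative, which is exactly the second proof the paper gives in the main text. The only cosmetic difference is that the paper's flow graph inserts an extra layer of $r$ fan-out nodes $F_\nu$ between the source and the group relays $\Gamma_\rho$, which lets them phrase the secrecy step as a min-cut computation in the subgraph obtained by deleting the secret-carrying source edges; your simpler direct source-to-relay model together with the invertibility condition of \cref{lemma1} reaches the same conclusion, and your caveat about restricting attention to eavesdroppers that hit at most $r$ nodes per group (implicit in ``the key-block restricted to $J$ is invertible'') is exactly the reduction to the set $\cW'$ the paper makes explicit.
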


In particular this theorem can be proved by constructing a random linear network code. We delegate that proof to \cref{sec:achievability_nklmr}.

The achievability result also follows from  \cite{rawat2012optimal}, that gives a construction for optimal secure LRC  employing Gabidulin codes to satisfy the security constraint. In the subsequent we describe their method, adapted for our scenario, because this will be useful later in our paper when we consider more general secret sharing schemes. 

An intuitive construction of $\ell$-secure schemes comes by replacing some inputs to a LRC with uniform random variables. Formally, consider a linear code $\cC$ with code-length $n$ and dimension  $\br{k+\ell}$. Let $G= [G^1 \;G^2] \in \ff_q^{n\times \br{k+\ell}}$ be the generator matrix of this code such that $G^1 \in \ff_q^{n\times \ell}$ and $G^2 \in \ff_q^{n\times k}$. Let $\bfa \in \ff_q^{k+\ell}$ be the input to the encoder of $\cC$ (i.e., the codeword is generated by multiplying $\bfa$ with the generator matrix of $\cC$). Denote by $\bfs \in \ff_q^{k}$ the input we want to store securely. We construct an $\ell$-secure secret sharing scheme using $\cC$ by taking,
\begin{equation}\label{construction_lin_code}
	\bfa = \begin{bmatrix}
		\bfr \\  \bfs
	\end{bmatrix}
\end{equation}
where $\bfr \in \ff_q^\ell$ is an instance of uniformly distributed random vector. This scheme is $\ell$-secure if and only if for any $\ell$ {\em linearly independent } rows of $G$ the corresponding rows of $G^1$ are linearly independent.

{\lemma \label{lemma1} Let $\bfg_i = [g_{i1} g_{i2} \ldots g_{i(k+\ell)}], i\in [\ell]$ be any  $\ell$ linearly independent rows of $G$. The secret sharing scheme constructed in \cref{construction_lin_code} is $\ell$-secure if and only if the corresponding row vectors $\bfg^1_i = [g_{i1} g_{i2} \ldots g_{i\ell}], i\in [\ell]$ of $G^1$ are linearly independent.}

The proof of \cref{lemma1} is given in \cref{proof_lemma1}. Note that using \cref{lemma1} we can add the security property to any linear code; we do not assume any locality property for the generator matrix $G$. But, it is clear that if the generator matrix $G$ has locality $r$, then so would the scheme constructed in \cref{construction_lin_code}. The construction of an optimal $(n,k,\ell,m,r)_q$ scheme is described in the following. 

\noindent{\em Gabidulin precoding construction:} Let $N$ be an integer.  The points $\alpha_i \in \ff_{q^N}, i\in [n]$ can be represented as vectors in $\ff_q^N$ and are  said to be {\em $\ff_q$-linearly independent} when the corresponding vectors over $\ff_q$ are linearly independent. A Gabidulin code from $\ff_{q^N}^k \rightarrow \ff_{q^N}^n$, for input $(f_1 f_2 \ldots f_k), f_i \in \ff_{q^N}$, is obtained by evaluating the linearized polynomial $\Theta(y) = \sum_{i=1}^{k} f_i {y^q}^{i-1}$ at $n$ $\ff_q$-linearly independent points $\alpha_i \in \ff_{q^N}, i\in [n]$. The linearized polynomial $\Theta(y)$ has the following linearity property,
\begin{equation}\label{gabidulin_linearity}
\Theta(ax+by)=a\Theta(x)+b\Theta(y)
\end{equation}
for all $x,y \in \ff_{q^N}$ and $a,b \in \ff_{q}$. Note that, we need $N\geq n$ to obtain $n$ $\ff_q$-linearly independent points in $\ff_{q^N}$.

Consider the generator matrix, $G = [\bfg_1 \ldots \bfg_n]^T$ of a linear $(n,k+\ell,r)_q$-optimal LRC, where $\bfg_i = [g_{i1} \ldots g_{i(k+\ell)}]^T$. Consider $\bfa = (\bfs\;\; \bfr)$, where $\bfr$ is an instance of uniformly distributed random variable in $\ff_{q^N}^\ell$ and $\bfs\in \ff_{q^N}^k$, $N \ge n$, denotes the secret. First, $\bfa$ is precoded using a Gabidulin code, $\Gamma : \ff_{q^N}^{k+\ell} \rightarrow \ff_{q^N}^{k+\ell} $ which is obtained by evaluating the polynomial,
\begin{equation}\label{Gabidulin_precoding}
\Psi_\bfa(y) = \sum_{i=1}^{k+\ell} a_i {y^q}^{i-1}	
\end{equation}
at the $\ff_q$-linearly independent points $\alpha_i \in \ff_{q^N}, i \in \brac{k+\ell}$. Now, representing $\Gamma(\bfa) \in \ff_{q^N}^{k+\ell} $ as a matrix of size $(k+\ell) \times N$ in $\ff_q$, each column of the matrix can be encoded independently using the generator matrix $G$ for the optimal LRC to get $(c_i)_{i=1}^n = \bfc \in \ff^n_{q^N}$. It is easy to show that this construction is $\ell$-secure. The optimality of the scheme then follows from the optimality of the initial linear LRC. The proof of security of this construction is given below.

\begin{IEEEproof}[Proof of \cref{thm:achievability} with the Gabidulin construction]
Assume without loss of generality (wlog) that the eavesdropper observes $\cE = [\ell] \subseteq [n]$ symbols $c_i, i\in \cE$. Let $\tilde{G} = [\bfg_1 \ldots \bfg_\ell]^T$. Further assume that the $\rank\paren{\tilde{G}}=\ell$, since otherwise the $\ell$-strength eavesdropper is equivalent to an $\rank\paren{\tilde{G}}$-strength eavesdropper. Let $\displaystyle \tilde{\alpha}_i = \sum_{j=1}^{k+\ell} g_{ij} \alpha_j, i\in \cE$. Then since $\tilde{G}$ is full-rank $\set{\tilde{\alpha}_i}_{i\in \cE}$ are $\ff_q$-linearly independent. Therefore, using \cref{gabidulin_linearity} we have,
\begin{align*}
	c_i &= \sum_{j=1}^{k+\ell} g_{ij} \Psi_\bfa(\alpha_j)  \\
				 &= \Psi_\bfa(\sum_{j=1}^{k+\ell} g_{ij} \alpha_j) = \Psi_\bfa(\tilde{\alpha}_i), i\in \cE .
\end{align*}

Let $\bfR, \bfS, \bfC$ be the random variables corresponding to the vector $\bfr$, the secret $\bfs$, and the node shares $\bfC = (C_i)_i$. To prove security we use the secrecy lemma in \cite[Lemma 4]{rawat2012optimal}, to show that $H(\bfC_\cE)\leq H(\bfR)$ and $H(\bfR|\bfS,\bfC_\cE)=0$ imply $H(\bfS|\bfC_\cE) = H(\bfS)$.  Indeed, $ H(\bfS| \bfC_\cE) \leq H(\bfS)$, and
\begin{subequations}\label{secrecy_lemma}
	\begin{align}
	H(\bfS) + H(\bfR)  &= H(\bfS | \bfR) + H(\bfR) \nonumber \\
	&=  H(\bfS,\bfR)  = H(\bfS,\bfC_\cE,\bfR)  \nonumber \\
	&= H(\bfC_\cE) + H(\bfS,\bfR | \bfC_\cE)  \nonumber \\
	&= H(\bfC_\cE) + H(\bfR | \bfS, \bfC_\cE)  + H(\bfS| \bfC_\cE)  \nonumber \\
	&= H(\bfC_\cE)  + H(\bfS| \bfC_\cE)\label{sec_lemma_assmp1} \\
	&\leq H(\bfR)  + H(\bfS| \bfC_\cE)\label{sec_lemma_assmp2}
\end{align}
\end{subequations}
where \cref{sec_lemma_assmp1,sec_lemma_assmp2} follow from the assumptions  $H(\bfR|\bfS,\bfC_\cE)=0$ and $H(\bfC_\cE)\leq H(\bfR)$ respectively.
On the other hand, assuming that the eavesdropper also knows $\bfs$ (in addition to $\bfc_\cE$), she/he has 
\begin{equation*}
	\tilde{c}_i = c_i - \sum_{j={1}}^{k} s_j \tilde{\alpha}_i^{q^{\ell+j-1}} = \sum_{j={1}}^{\ell} r_j \tilde{\alpha}_i^{q^{j-1}}, i\in \cE .
\end{equation*}
Since $B = [\tilde{\alpha}_i^{q^{j-1}}]_{i\in \cE, j\in [\ell]}$ is full rank, the eavesdropper can compute $[\tilde{c}_1 \ldots \tilde{c}_\ell] B^{-1} = [r_{1} \ldots r_{\ell}] $.  Thus, $H(\bfR|\bfS,\bfC_\cE)=0$. Now $H(\bfC_\cE)\leq H(\bfR)$, since $\abs{\cE} \leq \ell$. Therefore, we have an $(n,k,\ell,m,r)_{q^N}$-secret sharing scheme.
\end{IEEEproof}

\subsection{Constructions with small alphabet size: equivalence with maximal recoverability}
Note that, the size of the alphabet/shares in the construction of optimal secure scheme using Gabidulin codes is exponential in the number of nodes. 
In this section, our aim is to show that the construction of an optimal secure scheme with small alphabet size will amount to finding a {\em maximally recoverable code} 
over that alphabet. We use the construction in \cref{construction_lin_code} to form a secure scheme from an optimal LRCs with a small alphabet and analyze the conditions for that construction to satisfy \cref{lemma1}. We assume $(r+1) | n$ i.e.  $r+1$ divides $n$ for simplicity in this subsection.


We will need the following definition of maximally recoverable codes \cite{GopalanHJY13}.
{\definition \label{maximally_recov} 
Consider an $(n,k,r)_q$-optimal LRC. Let $\cQ_j : \abs{\cQ_j}=r+1, j\in [n/(r+1)]$ denote a partition of $[n]$ such that the recovery set of $i$th coordinate is,
\begin{equation}\label{part_recov}
	\cR_i = \cQ(i)\setminus \set{i}, \,\, \forall i \in [n], 
\end{equation}
where $\cQ(i) \in \set{\cQ_j}_{j}$ is the partition containing node $i$. 
Denote such an LRC by $(n,k,r,\set{\cQ_j}_{j})_q$. 
The $(n,k,r,\set{\cQ_j}_{j})_q$ LRC is called maximally recoverable  if the code obtained by puncturing any one symbol from each $\cQ_j$ is maximum distance separable (MDS).}

Note that, in  \cite{gopalan2012locality}, it was pointed out that an optimal linear LRC must have the recovery structure as in \cref{part_recov}. 

The main objective of this section is to show that the immediate construction of $(n,k,\ell,m,r)$-secret-sharing scheme from an optimal LRC is effective if and and only if  the code is maximally recoverable. 

%

{\lemma \label{lemma2} For any linear $(n,k+\ell,r,\set{\cQ_j}_j)_q$ -optimal LRC code with a generator matrix $G \in \ff_q^{n\times \br{k+\ell}}$ consider $\cS \subseteq [n] : \abs{\cS}=\ell \mbox{ and } \abs{\cS\cap \cQ_j} \leq r, j \in [n/(r+1)]$. Then, the rows corresponding to $\cS$ in $G$ are linearly independent for any $\ell$ such that
\begin{align}\label{eq:lemma2}
	\ell &\leq r-1+\br*{r\floor*{\frac{k}{r-1}}-k} 
\end{align}}
\begin{IEEEproof} 
 Partition $\cS$ as follows, $ \cS = \bigcup_{j\in [n/(r+1)]} \cS_j$ with $\cS_j = \cS\cap \cQ_j$ and let $\Lambda \defeq \set{j: \cS_j \ne 0}$. Consider a set $\cS^\prime\supset \cS : \abs{\cS^\prime}\leq k+\ell$ and define ${\cS^\prime}_j  \defeq {\cS^\prime} \cap \cQ_j$. Suppose that we can construct $\cS^\prime$ with $\cS^\prime_j \leq r, \forall j\in [n/(r+1)]$ such that the number of partitions $\cQ_j$ that contain $r$ co-ordinates of $\cS^\prime$ is at least $\ceil{(k+\ell)/r} -1$. Let $\Psi \defeq {\set{j:\cS_j^\prime = r }}$. Thus, 
\begin{equation}\label{constraints_S}
	\abs{\Psi} \geq \ceil{(k+\ell)/r} -1
\end{equation}

Construct a set $\cS^{\prime\prime} \supseteq \cS^\prime$ by adding $k+\ell-\abs{\cS^\prime}$ co-ordinates to ${\cS^\prime}$ such that, $\abs{S^{\prime\prime}\cap \cQ_j}\leq r, \forall j\in [n/(r+1)]$. Now at least $\abs{\Psi}$ more co-ordinates are recoverable from $\cS^{\prime\prime}$. Note that the input $\bfa$ for $(n,k+\ell,r,\set{\cQ_j}_j)_q$-optimal LRC is recoverable from any  $m=\br{k+\ell}+\ceil{\br{k+\ell}/r}-1$ co-ordinates and $\abs{\cS^{\prime\prime}}+\abs{\Psi} \geq m$. Thus, $\bfa$ is recoverable from $\bfc_{\cS^{\prime\prime}}$. Now, since $\abs{S^{\prime\prime}}=k+\ell$ the rows of $G$ corresponding to $\cS^{\prime\prime}$ (and hence $\cS$) must be L.I. We are now left with the task of constructing a set $\cS^\prime$ satisfying \cref{constraints_S} for the given $\cS$ with $\abs{\cS} = \ell$ satisfying \cref{eq:lemma2}. The construction is given below.

For $\abs{\Lambda} \leq k/\br{r-1}$ we can easily construct $\cS^\prime$.  Since  $\abs{\Lambda} \leq k/\br{r-1} \implies \abs{\Lambda} r \leq k+\ell$, we can choose $\Psi (\supseteq \Lambda) : \abs{\Psi} = \floor{\frac{k+\ell}{r}}$. Now to each of the partitions  $\set{\cS_j}_{j\in \Psi}$ add $r-\abs{\cS_j}$ co-ordinates from $\cQ_j$ to get a set $\cS^{\prime}$ of size $r \floor{\br{k+\ell} / r} \leq  k+\ell$. It is easy to see that this set satisfies \cref{constraints_S}.

 Now assume that $\abs{\Lambda} > k/\br{r-1}$. Choose any $\Psi \subseteq \Lambda : \abs{\Psi} = \floor{k/(r-1)}$. Select any $r-\abs{\cS_j}$ co-ordinates from $\cQ_j$ for all $j \in \Psi$. Adding these co-ordinates to $\cS$, we get $\cS^\prime$ satisfying $\abs{\cS^\prime} \leq \floor{k/(r-1)} (r-1) + \ell \leq k+\ell$. 
 Thus, from \cref{eq:lemma2} we have, 
\begin{align*}
\abs{\Psi}  +1& - \ceil{\br{k+\ell}/r} \geq \floor{k/\br{r-1}} - \frac{k+\ell}{r}  \\ 
																		 &\geq  \floor{k/\br{r-1}} - \frac{k}{r} - (1+\floor{k/\br{r-1}}-k/r -1/r) \\
																		 &= -(1-1/r) 
\end{align*}
	Since $\abs{\set{\cQ_j : \abs{\cQ_j \cap \cS^\prime} =r}} +1 - \ceil{(k+l)/r}$ is an integer, $m^\prime +1 - \ceil{(k+l)/r}\geq 0$, $\cS^\prime$ satisfies \cref{constraints_S}.
\end{IEEEproof}

For $\ell<r$, the construction (in \cref{construction_lin_code}) using an optimal LRC  code is $\ell$-secure since any $\ell$ rows of $G_1$ form an $\ell \times \ell$ Vandermonde matrix. For $\ell>r$, we have the following result, using \cref{maximally_recov} and \cref{lemma2}.

{\theorem \label{secure_tamo_barg} Consider a linear $(n,k+\ell,r,\set{\cQ_j}_j)_q$ -optimal LRC $\cC$. Then the construction in \cref{construction_lin_code} using code $\cC$ is $\ell$-secure if there exists $\cC^\prime \subseteq \cC$ of dimension $\ell$ such that $\cC^\prime$ is maximally recoverable. Conversely, if the construction in \cref{construction_lin_code} is $\ell$-secure then there must exist a maximally recoverable code $\cC^\prime \subseteq \cC$ of dimension $\ell$, for $\ell \leq r-1+\br*{r\floor{k/(r-1)}-k}$}

\begin{IEEEproof}
Let $G  = [G^1 \;G^2] \in \ff_q^{n\times (k+\ell)}$ be the generator matrix of $\cC$ where $G^1 \in \ff_q^{n \times \ell}$. Let $G^1$ be the generator matrix of a maximally recoverable code $\cC^\prime$. Consider a set $\cD \subseteq [n]$ of any $\ell$ linearly dependent rows of $G^1$. Since $\cC^\prime$ is maximally recoverable, $\cQ_j \subseteq \cD$ for at least one $j\in [n/(r+1)]$. Hence, the corresponding rows in $G$ must also be linearly dependent. Thus, from \cref{lemma1} the secret sharing construction in \cref{construction_lin_code} must be $\ell$-secure.

Now, suppose that $\cC$ does not contain any subcode of dimension $\ell$ which is maximally recoverable. Then, the code generated by $G^1$ is not maximally recoverable. Thus, there would exist an $\cS \subseteq [n]: \abs{\cS}=\ell \mbox{ and } \abs{\cS \cap \cQ_j} \leq r, \forall j\in [n/(r+1)]$ such that the rows in $G^1$ corresponding to $\cS$ are linearly dependent. Now from \cref{lemma2} we know that the rows corresponding to $\cS$ in $G$ are not linearly dependent for $\ell \leq r-1+\br*{r\floor{k/(r-1)}-k}$. Hence, from \cref{lemma1} the secret sharing scheme cannot be $\ell$ secure.	
\end{IEEEproof}


Recently an optimal construction of locally repairable codes was proposed in \cite{barg2013family} by Tamo and Barg for general values of the parameters $n,k,$ and $r$ and alphabet size of $O(n)$. Our \cref{secure_tamo_barg} implies that the secret sharing scheme constructed in \cref{construction_lin_code} using such code is $\ell$-secure if and only if the Tamo-Barg codes are maximally recoverable. In general these codes are not maximally recoverable.
It should be noted that, it is quite a nontrivial open problem to construct maximally recoverable codes with linear or even polynomial (in blocklength) alphabet size \cite{GopalanHJY13}.


In the next two sections we extend the notions and results of \cref{sec:converse_results} to
other generalized repair conditions related to distributed storage.

\section{Security for Schemes with cooperative repair} 
\label{sec:schemes_for_co_operative_repair}

Cooperative repair for a locally repairable scheme addresses simultaneous multiple failures in a distributed storage system  \cite{rawat2014cooperativelocal}\footnote{There is a related notion of cooperative recovery in regenerating codes \cite{shum2013cooperative} and security in such systems \cite{koyluoglu2014secure}. In this paper we are concerned with only the local recovery problem, and not the regenerating problem.}. To this end, we extend the definition in \cref{cond3} to a $(r,\delta)$ scheme where any $\delta$ --instead of just one-- shares can be recovered from $r$ other shares.

{\definition \label{coop_repair} A set $\cC \subseteq \ff_q^n$ is said to be $(r,\delta)$-repairable if for every $\Delta \subseteq [n] : \abs{\Delta}\leq \delta$ there exists a set $\cR(\Delta) \subseteq [n]\setminus  \Delta : \abs{\cR(\Delta)} \leq r$ such that for all $\bfc, \bfc' \in \cC$,
\begin{equation}
	{\bfc}_{\Delta} \ne {\bfc'}_{\Delta} \implies {\bfc}_{\cR(\Delta)} \ne {\bfc'}_{\cR(\Delta)}
\end{equation}
}

Using \cref{coop_repair} we can generalize the notion of an $(n,k,\ell,m,r)_q$-secret sharing scheme. 
For this system we derive an upper bound on the capacity $k$ given $n, m, \ell,r, \mbox{ and }\delta$.

{\definition An $(n,k,\ell,m ,\br{r,\delta})_q$-secret sharing scheme consists of a randomized encoder $f(.)$ that stores a file $\bfs \in \ff_q^k$ in $n$ separate shares, such that the scheme is $(r,\delta)$-repairable (\cref{coop_repair}), satisfies the recovery condition (cf. \cref{cond1}) and $\ell$-secure (cf. \cref{cond2}).}

\subsection{The case of m =n}
Error-correcting codes with $(r,\delta)$-repairability were considered in \cite{rawat2014cooperativelocal} ($\ell=0$ or no security) and the following upper-bound on the rate of such codes has been proposed,
for the case of $m =n$.
\begin{equation}\label{bounds_prev_coop}
	R = \frac{k}{n} \leq \frac{r}{r+\delta}.
\end{equation}
For the case of $\ell$-secure codes we give an analogous upper bound on  the rate of a secret sharing scheme in the following.

{\theorem \label{coop_theorem} The rate $R=k/n$ of an $(n,k,\ell,n, (r,\delta))_q$ secret sharing scheme is bounded as,
\begin{equation}\label{coop_theorem_UB}
	R \leq \frac{r}{r+\delta} - \frac{\ell}{n}.
\end{equation}
}
\begin{IEEEproof}
	For an $(n,k,\ell,(r,\delta))_q$ scheme we construct a set of size $m=n$ similar to \cref{algo:conver_result} except instead of choosing a set of size $1$ in steps 2 and 5, we find a set of size $\delta$. Then using the same arguments we must have at least $\nu = m/(r+\delta)$ number of steps. Hence, subtracting the number recoverable symbols $\delta \nu$ from the $m$ symbols we must have,
	\begin{align*}
		k+\ell &\leq m -  \delta \nu = n -  \delta \frac{n}{r+\delta} \\
		\implies \frac{k+\ell}{n} &\leq \frac{r}{r+\delta}.
	\end{align*}
\end{IEEEproof}

\noindent{\em Construction:} Note that,  any linear $q$-ary $(r,\delta)$-repairable error-correcting code of length $n$ and dimension $k$  will give rise to a $(n,k,0, (r,\delta))$-secret sharing scheme. In \cite[Sec.~6]{rawat2014cooperativelocal}, an $(r,\delta)$ repairable code has been constructed using bipartite graphs of large girth.
In particular, that construction results in parameters such that
$$
\frac{k}{n} \ge \frac{r-\delta}{r+\delta}.
$$
It can also be seen from the discussion of \cref{sec:constr} that Gabidulin precoding (\cref{Gabidulin_precoding}) would give an  $\ell$-secure construction with alphabet $\ff_{q^N}$,  $N\geq n$, from any optimal linear $(n,k+\ell,0,(r,\delta))_q$-secret sharing scheme. Thus, for any $(n,k+\ell,0,(r,\delta))_q$ secret sharing scheme achieving the upper-bound in \cref{bounds_prev_coop} we can achieve the corresponding upper-bound in \cref{coop_theorem}.  Hence, using the code  of \cite[Sec.~6]{rawat2014cooperativelocal} in conjunction with the Gabidulin precoding,  it is  possible to obtain a rate of
$$
\frac{k}{n} \ge \frac{r-\delta}{r+\delta} - \frac{\ell}{n},
$$
which is an additive term of $\frac{\delta}{r+\delta}$ away from the optimum possible.

\subsection{The case of $m < n$.}
The bound for general case of $m <n$ can be deduced from the same arguments as above. In fact,
by slightly generalizing \cref{algo:conver_result}, we get the following result: for any $(n,k,\ell,m ,\br{r,\delta})_q$-secret sharing scheme ,
\begin{equation}\label{coop_conver_result}
  k+\ell \leq m-\floor*{\frac{m}{r+\delta}}\delta - h
\end{equation}
where $h =\br*{m \mod (r+\delta)-r}^+$ and $x^+ \defeq \begin{cases} 
      0 & x\leq 0, \\
      x & x> 0. \\
   \end{cases}$
   
Note that, this results in slightly weaker bound for the case of $m=n$ than \cref{coop_theorem_UB}.
In general for $m<n$ and arbitrary values of $\ell$, we do not have any good construction that will be close to the bound. While the expander-graph based 
constructions of $(r,\delta)$-locally repairable codes from \cite{rawat2014cooperativelocal} can be generalized, their performance is very far from the bound of
 \cref{coop_conver_result}.

\section{Security for repairable codes on graphs} 
\label{sec:security_for_repairable_codes_on_graphs}
Another extension of local repair property for distributed storage has recently been proposed in  \cite{mazumdar2013duality, mazumdarachievable}. 
Consider a Distributed Storage System as a directed graph $\cG$ such that a node of the graph represents a node of the Distributed Storage System and each node can connect to only its out-neighbors for repair. We define an $\ell$-secure code in this scenario as follows.

\subsection{Repairable Codes on Graph}

{\definition \label{graph_code} Let $\cG = ([n],E)$ be a  graph on $n$ nodes. An $(n,k,\ell,m,\cG)_q$-secret sharing scheme consists of a randomized encoder $f$ that can store a uniformly random secret $\bfS \in \ff_q^k$ on $n$ shares/nodes, $\bfC=f(\bfS), \bfC \in \ff_q^n$, such that the system is $\ell$-secure (cf. \cref{cond2}) and the data can be recovered from any $m$ shares (cf. \cref{cond1}). In addition the share of  any node can be recovered from its neighbors i.e. 
$$H(\bfC_i|\bfC_{N(i)})=0$$
where $N(i) = \{j \in [n]: (i,j) \in E\}$ denotes the neighbors (out-neighbors in the case of a directed graph) of node $i$ in the graph $\cG = ([n],E)$.}

A  bound on the capacity of such a scheme in directed graphs for $\ell=0$ (no security) was derived in \cite{mazumdar2014storage},
\begin{equation}\label{mazumdar_rdss}
	m  \geq k +\!\! \max_{\substack{U \in \cI(\cG) :\\ \abs{N(U)} \leq k-1}} \abs{U} 
\end{equation}
where $\cI(\cG)$ denotes the set of induced acyclic subgraphs in $\cG$, and $N(U) \defeq \cup_{i\in U} N(i)\setminus U$ denotes the neighbors of $U$. 
For undirected graphs we have the same bound with $\cI(\cG)$ denoting the collection of all independent sets of the graph.
The lower bound on $m$ for an $\ell$-secure scheme on a graph $\cG$ is given in the following.

{\theorem For any $(n,k,\ell,m,\cG)_q$-secret sharing scheme on a directed graph $\cG$, $m$ satisfies the following lower bound,
\begin{equation}\label{RDSS_ub}
	m \geq k+\ell +\!\! \max_{\substack{U \in \cI(\cG) :\\ \abs{N(U)} \leq \ell+k-1}} \abs{U} 
\end{equation}
where $\cI(G)$ denotes the set of induced acyclic graphs in $\cG$.}
\begin{proof}
	Since any $m$ co-ordinates in the shares $\bfC = (C_i)_{i\in[n]}$ can recover the secret $\bfS$ we must have,
	\begin{equation}\label{intermediate_triv}
		m\geq \abs{W}+1
	\end{equation}
	for all $W \subseteq [n]$ such that the $H(\bfS | \bfC_{W}) > 0$. Let $U$  be an acyclic subgraph $U \in \cI\br{\cG}$, such that $N(U)\leq \ell+k-1$. Construct a set $V \supseteq \set{U \cup N(U)}$ by adding any $\ell+k-1 - \abs{N(U)}$ nodes to $U \cup N(U)$.	Thus, $\abs{V} = k+\ell+\abs{U}-1$. We show that $H(\bfS | \bfC_{V}) > 0$ for any such $V$.

	Note that for any three random $X,Y,Z$ variables we must have,
	\begin{align}\label{general_infotheo}
		H(X|Y,Z)& = H(X,Z|Y) - H(Z|Y)\nonumber\\
		 &= H(X|Y)+H(Z|X,Y) - H(Z|Y) \nonumber\\
		 &\geq H(X|Y) - H(Z).
	\end{align}

	Assume that the eavesdropper selects an $\ell$-subset $\cE \subseteq [n]$ in the set $V$. Then, since the eavesdropper must not get any information about the secret,
	\begin{equation}\label{eaves_dropper}
		H(\bfS | \bfC_\cE) = H(\bfS)
	\end{equation}
	Since the sub-graph $U$ is acyclic the nodes in $U$ must be a function of the leaf nodes and the nodes in $N(U)$. Now, the leaf nodes must also be a function of $N(U)$ since their out-neighbors can only be in $N(U)$. Therefore,
	\begin{align*}
		H(\bfS|\bfC_V)& = H(\bfS|\bfC_{N(U)}) = H(\bfS | \bfC_{\cE}, \bfC_{N(U) \setminus\cE}) \\
		&\overset{(a)}{\geq} H(\bfS | \bfC_{\cE}) - H(\bfC_{N(U)\setminus\cE}) \\
																			 &\overset{(b)}{=} H(\bfS) - H(\bfC_{N(U)\setminus\cE}) \\
																			 &\overset{(c)}{>}0
	\end{align*}
	where $(a)$ and $(b)$ follow from \cref{general_infotheo} and \cref{eaves_dropper} respectively, and $(c)$ is is true since $\abs{N(U)\setminus\cE}= k-1$.
\end{proof}


When  $m=n$,  i.e. when the scheme does not need to protect against catastrophic failures, we can formulate a converse bound for repairable codes on graphs that does not follow directly from the above theorem. 

{\theorem Consider an $(n,k,\ell,n,\cG)_q$ secret sharing scheme. The secrecy capacity of the scheme satisfies the following upper-bound.
\begin{equation}\label{secrecy_no_distance}
	k \leq n - \abs{U} - \abs{\ell}
\end{equation}
where $U$ is the largest acyclic induced subgraph in $\cG$ when $\cG$ is a directed graph, and it is the largest independent set when $\cG$ is undirected.}
\begin{proof}
We will show the proof for $\cG$ directed.
	Consider the shares $\bfC_{U}$ corresponding to the nodes in $U \subseteq [n]$.  The recovery set of any node in $U$ can contain its children in $U$ or co-ordinates in $[n]\setminus U$. Since $U$ is ayclic, all the leaf nodes of $U$ have recovery sets in $[n]\setminus U$. Thus, we can recover all the leaf nodes from the co-ordinates in $[n]\setminus U$. Now, we can recursively recover all the co-ordinates of $U$ from the co-ordinates in $[n]\setminus U$. Thus,
	\begin{equation}\label{t0}
		H(\bfC_U | \bfC_{[n]\setminus U}) = 0
	\end{equation}

	\Cref{t0} is true because all the leaf nodes in $U$ must have their recovery sets in $[n]\setminus U$. And by recovering the leaf nodes we can recover all nodes in $U$. 	Now, since $H(\bfS | \bfC)= 0$ we must have from \cref{t0},
	\begin{equation}\label{t1}
		H(\bfS|\bfC_{[n]\setminus U}) = 0 
	\end{equation}
	Now, suppose that the eavesdropper selects an $\ell$-subset $\cE \in [n]\setminus U$. Then, we must have,
	\begin{equation}\label{t2}
		H(\bfS) = H(\bfS | \bfC_{\cE})
	\end{equation}
Therefore, using \cref{t1,t2} we have,
\begin{align*}
	H(\bfC_{[n]\setminus U}| \bfC_{\cE} )  &= H(\bfC_{[n]\setminus U}| \bfC_{\cE} )  + H(\bfS| \bfC_{[n]\setminus U}, \bfC_{\cE} )	\\
	&=  H(\bfS, \bfC_{[n]\setminus U}| \bfC_{\cE} ) \\
	&= H(\bfS| \bfC_{\cE} ) + H(\bfC_{[n]\setminus U} | \bfS, \bfC_{\cE} ) \\
	&= H(\bfS) + H(\bfC_{[n]\setminus U} | \bfS, \bfC_{\cE} ) \\
	\implies H(\bfS)  &= H(\bfC_{[n]\setminus U}| \bfC_{\cE} ) - H(\bfC_{[n]\setminus U} | \bfS, \bfC_{\cE} ) \\
	\implies H(\bfS)  &\leq H(\bfC_{[n]\setminus U} | \bfC_{\cE}) \leq n - \abs{U} - \ell .
\end{align*}
\end{proof}

Note that the bound in \cref{secrecy_no_distance} parallels the feedback vertex set upper-bound in \cite[Prop.~11]{mazumdar2014storage}. Here, a feedback vertex set of a graph is a set of nodes such that every cycle in the graph has a vertex in the set.

\subsection{Achievable Schemes for Secure Repairable Codes on Graphs} 
\label{sec:gabidulin_codes}
In this section we consider construction of  $(n,k,\ell,m,\cG)_q$-secret sharing scheme only when $m=n$. We do not have any nontrivial construction
for the case of $m <n$.

Consider a secret sharing scheme for the case of undirected graphs 
(\cref{graph_code}). A maximum matching $\cM(\cG)$ of the graph $\cG$ is defined as the set of edges of maximum cardinality such that no two edges have a vertex in common. To construct a recoverable scheme for this code, with input  $\bfx \in \ff^{\abs{\cM(\cG)}}$, we assign a coordinate of $\bfx$ to both vertices for every edge in $\cM(\cG)$. For recoverability, we note that a symbol in  vertex $v$ can be recovered from $u$, where $(v,u) \in \cM(\cG)$. 

Suppose $\abs{\cM(\cG)} = k+\ell$. Consider the vector input $\bfx \in \ff^{k+\ell}$ to the above scheme. We set $\bfx = G \times [\bfs\;\; \bfr], \bfs \in \ff^k, \bfr \in \ff^\ell$, where $\bfs$ is the secret, $\bfr$ is an instance of a uniform random vector, and $G$ is the $(k+\ell)\times(k+\ell)$ Vandermonde matrix $G=[\alpha_i^{j-1}]_{ij}$ with $\set{\alpha_i}_i$ distinct elements in $\ff_q$. Thus, from \cref{lemma1}, we see that this scheme is $\ell$-secure as well as
recoverable.

The capacity of this scheme is $k =  \abs{\cM(\cG)} -\ell \ge \frac{n -\abs{U}}{2} -\ell,$ where $U$ is the maximum independent set. This is true since 
if we remove both end-vertices of the edges of the matching then we are left with an independent set. Compared to \cref{secrecy_no_distance}, we are
an additive term of at most $\frac{n -\abs{U}}{2}$ away from what is the maximum possible.

For directed graphs $\cG = ([n],E)$ we use the repairable codes presented in \cite{mazumdar2014storage} below to construct a secure scheme. 
Suppose that the graph has $K :=k+\ell$ vertex disjoint cycles. 
Then it is easy to see that we can form a locally repairable scheme capable of  storing  $k+\ell$  symbols (one symbol per cycle) by repeating the same symbol on every vertex in a cycle. Hence, it is possible to store as many symbols as the maximum number of vertex disjoint cycles in the graph. In \cite{mazumdar2014storage}, it was  shown that we can do better by using vector codes. We describe below the vector linear LRC codes constructed in \cite{mazumdar2014storage}. 

Consider the set $\cP$ of all cycles in $\cG([n],E)$. Suppose, $\Pi : \cP \rightarrow \rationals$ assigns a rational number to every directed cycle. Let $V(C), C\in \cP$ denote the vertices of the cycle $C$. Let $K$ denote the maximum value of $\sum_{C\in \cP} \Pi\br{C}$, over all such mappings $\Pi$, under the following constraint,
\begin{equation*}
	\sum_{C:i \in V(C)} \Pi(C) \leq 1, \;\;\forall i \in [n] . 
\end{equation*}
Let the optimal assignment $\Pi$ on $\cP$ be denoted as $\Pi(C) = \frac{n(C)}{p}$, where $n(C), p \in \integers^+$. It is possible to find this optimum 
by solving a linear program.
Then \cite{mazumdar2014storage} constructs a vector LRC for the graph $\cG$ in $\ff_q$ with storage capability of  $pK$ symbols
 and per node storage equal to $p$ symbols.

Let $\bfs \in \ff_{q}^{pk}, \bfr \in \ff_q^{p\ell}$ represent the secret and an instance of a uniform random vector, respectively. We obtain  $\bfx \in \ff_q^{pK}, K :=k+\ell,$ by  $\bfx = G\times [\bfs\;\; \bfr]$, where $G$ is a $pK\times pK$ Vandermonde matrix $G=[\alpha_i^{j-1}]_{ij}$ with $\set{\alpha_i}_i$ distinct elements in $\ff_q$. 
$\bfx$ is then stored in the graph using the scheme described above.
Since an $\ell$-strength eavesdropper can only observe at most $p \ell$ co-ordinates in $\bfa$, we can use \cref{lemma1} to see that the scheme is $\ell$-secure as well as recoverable. 

It is known (cf. \cite{mazumdar2014storage}) that,
$4K  \ln 4K \ln \log_2 4K \geq  n-|U|,$ for $U$ being the maximum acyclic induced subgraph. 
Hence, we must have, 
$$k  \geq  \frac{n-|U|}{ c\log n \log \log n}-\ell.$$
However this achievability result  is quite far away from the bound of \cref{secrecy_no_distance}.


\section{Perfect Secret Sharing and General Access Structures}\label{sec:largest_share_size}
So far in this paper we were concentrating on a secret sharing scheme that is not perfect, i.e., the access structure and
the block-list are not complementary. 
In this section we provide results regarding existence of locally repairable of perfect secret sharing schemes 
and the relation between sizes of shares and secret in those schemes.

\subsection{Perfect access structures with locality}\label{perfect_as_with_loc}
To make the $(n,k,\ell,m,r)$ secret sharing scheme perfect, we must have $m = \ell+1$. This results in a threshold secret-sharing scheme. Now, from \cref{easy_bound} we have,
\begin{equation*}
	k \leq 1-\floor*{\frac{\ell+1}{r+1}}.
\end{equation*}
Thus, for storing any secret we must have $r \geq \ell+1 = m $. Since any secret sharing scheme works when $r \geq m$ (local repair in this case imply full revelation of secret) only trivial locally repairable codes are possible for  threshold secret sharing schemes. This implies the following statement.
{\proposition A threshold secret sharing scheme is not locally repairable.} 


 Note that,  perfect secret sharing schemes are a natural generalization of threshold schemes. Although for threshold schemes the locality cannot be small/nontrivial,  we show that this is not true for general access structures and perfect schemes.
Indeed, the following is true.

{\proposition There exists an access structure $\cA_s$, for which   a perfect secret sharing scheme is possible with arbitrary non-trivial locality $r$ i.e. $r < \min_{A\in \cA_s} \abs{A}$.}
\begin{IEEEproof}
Let $n,\kappa$ be such that $r|\kappa$ and $(r+1)|n$. Consider an $(n,\kappa,r,\set{\cQ_j}_j)$  maximally recoverable LRC (\cref{maximally_recov}). We know that such codes exist from \cite{GopalanHJY13}. Now, we use the Gabidulin precoding method described above to construct a $(n,k=1,\ell=\kappa-1,m=\kappa(1+1/r),r)$ secret sharing scheme from this code. 

Define the access structure to be $\cA_s = \set{A \subseteq [n] \;: \;\; \sum_{j=1}^{n/(r+1)} \min\set{\abs{A\cap \cQ_j}, r} \geq \kappa}$. Now given any $A\in \cA_s$, a user accessing the shares corresponding to $A$ can determine the secret $s_0$ because the set always contains $k$ shares of a punctured $(nr/(r+1),\kappa)$-MDS code.

For a perfect secret sharing scheme the block-list is given by $\cB_s = \set{B \;: \;\; \sum_{j=1}^{n/(r+1)} \min\set{\abs{B\cap \cQ_j}, r} < \kappa}$. Assume that the eavesdropper has access to a set $B \in \cB_s$. Construct the following set of size at most $\kappa-1$ from $B$,
$$B^\prime = \cup_{j=1}^{n/(r+1)} N_j^\prime, B^\prime \ \subseteq B$$
where $N_j^\prime \subseteq N_j, N_j = B\cap \cQ_j$ is obtained by removing any one co-ordinate if $\abs{N_j}>r$, otherwise $N_j^\prime = N_j$. Note that $\abs{B^\prime} < \kappa$. Since all the shares in $B$ are recoverable from $B^\prime \subseteq B$, an eavesdropper with access to the nodes in $B$ is equivalent to an eavesdropper with access to $B^\prime$. And since $\abs{B^\prime} \leq \ell = \kappa-1$, the eavesdropper does not get any information about the secret.
\end{IEEEproof}

Can the above proposition be made general? Is it possible to characterize the locality for general secret sharing schemes? 
Shamir's \cite{shamir1979share} perfect threshold secret sharing scheme for the access structure $\cA_s = \set{A \subseteq [n] : |A|\geq k}$
is one of the first general construction of secret sharing protocols.
The scheme is defined for a scalar secret $s \in \ff$ and a set of $n$ participating nodes $P$. The scheme uses an $(n,k)$ Reed Solomon code defined using the polynomial $\sigma(x) = s + \sum_{i=1}^{k-1}r_i x^i$, where $r_i$ are instances of uniform random variables in $\ff$.

Ito, Shaito, and Nishizeki \cite{Ito1987} define a generalization of Shamir's scheme that works for arbitrary monotone access structures. \label{maximal_minimal_sets} Define a maximal element $B \in \cB$ as a set such that $A\supsetneq B \implies A \notin \cA$. Similarily,  define a minimal set $A \in \cA$ as a set such that $B\subsetneq A \implies B\notin \cA$. Consider the set of maximal elements of the block-list $\cB$, denoted $\cB^{\dagger}$. The scheme uses the generator polynomial $ \sigma(x) = s + \sum\limits_{i=1}^{|\cB^\dagger|-1}r_i x^i$ to generate  $\abs{\cB^{\dagger}}$ shares $\set{c_B}_{B \in \cB^{\dagger}}$ -- one share corresponding to each maximal set in $\cB$. The shares are distributed such that each user gets the shares corresponding to the subset it does not belong to, i.e. participant node $p$ gets the shares 
\begin{equation}\label{node_shares}
	\set{c_B : p \notin B, B \in \cB^{\dagger}}
\end{equation}

Now, suppose that share of a node $p$ is lost in a secure code with participants $P$ and block-list $\cB$. To recover the share of $p$ we access the shares of participants in the set $\cR(p)$ where the optimal set $\cR(p)$ is
\begin{equation}
	\cR(p) = \min_{{R:\forall B \in \cB^{\dagger}, p \notin B \; R \not\subseteq B}} |R|.
\end{equation}
To have non-trivial locality, one must have $\max_p |\cR(p)|$ to be strictly less than the maximal sets in the block-list.  


\subsection{Size of a share for perfect secret sharing with locality}\label{sec:largest_share_sizech} 

We know that, for perfect secret sharing schemes, the size of the secret cannot be larger than the size of a share \cite[Lemma 2]{Beimel_secret_sharing_schemes}. Let us see why this statement is true. 
	Let the secret $\bfs$ belong to a domain $\cK$ and the share of node $j$ belong to $\cK_j$. Assume that there exists a perfect secret sharing scheme which realizes the access structure $\cA$ when $\abs{\cK}<\abs{\cK_j}$. Let $B\subseteq [n]$ be a minimal set in $\cA$ such that $j \in B$. Define $B^\prime = B\setminus \set{j}$. Then, since the secret sharing scheme is perfect, for every value of the the shares in $B_j$ all secrets in $K$ must have the same probability. Thus, since the value of the shares of $B$ determine the secret completely there must exist an injective mapping from $K$ to $K_j$. But since $\abs{K_j} < \abs{K}$ this cannot be possible.

In \cite{Csirmaz1997}  the minimum node storage required for arbitrary monotone access structures is analyzed. In that paper,   an access structure was constructed for which the sizes of the shares has to be $n/log(n)$ times the size of the secret for any perfect scheme. For secret sharing schemes with local repairability and fixed recovery sets, all monotone access structures are not feasible. The minimal sets of the access structure cannot include any recovery set. Here, we extend the result in \cite{Csirmaz1997} to the restricted class of monotone access structures. 

Assume $(r+1)|n$. Suppose that the secret denoted by the random variable $S$ is stored on $n$ shares as  $C_i, i\in[n]$ and the shares have locality $r$ (\cref{cond3}). Consider a partition of $[n]$, $\cQ_j : {\cQ_j}, j \in [n/(r+1)]$ such that the recovery sets are given by  \cref{part_recov}.
For a perfect secret sharing scheme on $[n]$ with monotone access structure $\cA_s$, the minimal sets  $\cA_s^\star$ of $\cA_s$, must satisfy,
\begin{equation}\label{restricted_access_struct}
	A \in \cA_s^\star \implies A \not\supseteq \cQ_j \,\, .
\end{equation}
Denote this class of monotone access structures with $\bbM_s$. We have the following result for the minimum size of a share for secret sharing schemes with access structure $\cA_s \in \bbM_s$.

\begin{theorem}\label{thm:size}
Consider distribution of shares of secret $S$ to $n$ nodes with locality $r$, recovery sets as in \cref{part_recov}. Then, there is an access structure $\cA_s \in \bbM_s$ (\cref{restricted_access_struct}), such that any perfect scheme for $\cA_s$, if exists, must satisfy,
\begin{equation}
	\alpha \geq \frac{(r+1)n}{r\log n} H(S).
\end{equation}
where $\alpha$ is the average entropy of the shares.
\end{theorem}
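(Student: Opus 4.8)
The plan is to adapt Csirmaz's entropy-based argument \cite{Csirmaz1997} to the restricted class $\bbM_s$ of access structures compatible with the fixed recovery-set structure in \cref{part_recov}. Recall that Csirmaz's method constructs a monotone access structure on a ground set that is a chain-like arrangement of ``layers,'' and then uses the submodularity of entropy (Shannon inequalities) together with the perfectness conditions $H(S\mid \bfC_A)=0$ for $A\in\cA_s$ and $H(S\mid \bfC_B)=H(S)$ for $B\notin\cA_s$ to extract, at each layer, an additive gain of $H(S)$ in the accumulated entropy of the shares, while the total entropy is bounded above by (number of nodes)$\times\alpha$. The obstruction here is that the access structure is not free: by \cref{restricted_access_struct} no minimal authorized set may contain a full block $\cQ_j$, and the locality constraint \cref{cond3} forces $H(\bfC_i\mid \bfC_{\cQ(i)\setminus\{i\}})=0$. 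So I must redo the construction with the $n/(r+1)$ blocks as the atomic building units rather than individual nodes.

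First I would group the $n$ nodes into the $n/(r+1)$ blocks $\cQ_1,\dots,\cQ_{n/(r+1)}$ and observe that, because of locality, any $r$ coordinates of a block determine the whole block; hence for the purpose of entropy accounting each block behaves essentially like a single ``super-node'' of entropy at most $(r+1)\alpha$ but carrying only $r$ ``independent'' coordinates' worth of information — more precisely $H(\bfC_{\cQ_j})\le r\alpha$ after accounting for the redundancy, since one coordinate is a function of the other $r$. Next I would run Csirmaz's layered construction using these blocks as the ground elements: build an access structure $\cA_s$ on the blocks (lifted to nodes by declaring a node-set authorized iff the induced block-pattern, capped at $r$ per block as in the earlier proposition, is authorized) that is ``worst-case'' in Csirmaz's sense, producing a sequence of $\Theta(n/(r+1))$ nested sets $B_1\subsetneq B_2\subsetneq\cdots$ alternating across the authorized/unauthorized boundary. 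Then, applying the Shannon-type inequalities exactly as in \cite{Csirmaz1997} — at each step using perfectness to turn a boundary crossing into a $+H(S)$ increment — I would obtain
\[
\sum_{j=1}^{n/(r+1)} H(\bfC_{\cQ_j}) \;\ge\; c\,\frac{n/(r+1)}{\log(n/(r+1))}\,H(S)
\]
for an absolute constant $c$ that Csirmaz's analysis yields (in fact with $c$ essentially $1$ up to the $\log$). Finally I would combine this with the per-block bound $H(\bfC_{\cQ_j})\le r\alpha$ (the locality redundancy is what saves the factor $r$ rather than $r+1$ here), giving $r\alpha\cdot n/(r+1)\ge c\,\tfrac{n}{(r+1)\log n}H(S)$, i.e. $\alpha\ge \tfrac{(r+1)n}{r\log n}H(S)$ after absorbing constants, which is the claimed bound.

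The main obstacle is the second step: verifying that Csirmaz's worst-case construction can be carried out \emph{within} $\bbM_s$, i.e. that forbidding minimal authorized sets from containing any $\cQ_j$ does not destroy the inductive layered structure he exploits. The resolution should be that his construction is already ``spread out'' across many ground elements — each minimal authorized set in his bad example touches many layers and has small intersection with each — so by taking one block per layer (or at most $r$ nodes per block into any minimal set) the constraint \cref{restricted_access_struct} is automatically respected; one just has to check that the capping-at-$r$ lift of the access structure to nodes is still monotone and still perfect, which is the same bookkeeping as in the proof of the earlier proposition on perfect schemes with small locality. A secondary, purely technical point is to confirm that replacing ``$n$ nodes'' by ``$n/(r+1)$ blocks each of redundancy $1$'' in the entropy inequalities correctly produces the factor $\frac{r+1}{r}$ rather than $\frac{1}{r+1}$ or $1$; I would double-check this by writing $H(\bfC_{\cQ_j})=H(\bfC_{\cR_i})+H(\bfC_i\mid\bfC_{\cR_i})=H(\bfC_{\cR_i})\le r\alpha$ for a representative $i\in\cQ_j$, and tracking this bound through the final summation.
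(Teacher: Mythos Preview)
Your proposal has a genuine gap that prevents it from reaching the claimed bound. The arithmetic in your final step is simply wrong: from $r\alpha \cdot \frac{n}{r+1} \ge c\,\frac{n}{(r+1)\log n}H(S)$ the factors of $n/(r+1)$ cancel, yielding only $\alpha \ge \frac{c}{r\log n}H(S)$, which is off from the target by a factor of $n(r+1)$. More importantly, the inequality you wrote for $\sum_j H(\bfC_{\cQ_j})$ is not what Csirmaz's argument produces: his method bounds $\phi(M)$ for a \emph{small} designated set $M$ (of size $\approx \log(\text{number of participants})$), which translates into a lower bound on the \emph{maximum} participant entropy, not the sum over all participants. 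Done correctly, treating the $n/(r+1)$ blocks as atomic super-nodes gives at best $r\alpha \ge \frac{n/(r+1)}{\log(n/(r+1))}H(S)$, i.e.\ $\alpha \ge \frac{n}{r(r+1)\log n}H(S)$, which is weaker than the theorem by a factor of $(r+1)^2$.

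The paper's proof avoids this loss by \emph{not} collapsing blocks to super-nodes. It takes the Csirmaz ``small set'' $M$ to be a union of $\eta/(r+1)$ \emph{full} blocks ($\eta\approx\log n$ nodes), but builds the chain $\{M_i\}_i$ of subsets of $M$ at the \emph{node} level, so that the chain has length $\approx 2^{\eta}\approx n$ rather than $2^{\eta/(r+1)}$. The locality constraint \cref{restricted_access_struct} is then enforced not ``for free'' but by an explicit combinatorial condition (C1 in the proof): one must discard those subsets $M_i$ for which the step $M_i\to M_{i+1}$ removes at most one node from each block, since otherwise the capped minimal sets would fail to witness $P\cap Q\notin\cA_s$ in the submodularity step. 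The number of discarded subsets is $(r+2)^{\eta/(r+1)}-1=o(2^{\eta})$, so the chain length stays $\approx n$. Combining $\phi(M)\ge \nu-1\approx n$ with $\phi(M)\le \frac{\eta r}{r+1}\cdot\frac{\alpha}{H(S)}$ (here is where the locality redundancy saves the factor $\frac{r}{r+1}$) gives the stated bound. The missing idea in your plan is precisely this node-level enumeration inside $M$ together with the careful removal of the $(r+2)^{\eta/(r+1)}$ bad subsets.
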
 
\begin{IEEEproof}
First, let us define a {\em polymatroid} $(Q=\set{[n],S},\phi)$ as follows,
\begin{subequations}\label{polymatroid}
	\begin{gather}
	\phi(A) = \frac{H(\bfc_A)}{H(S)}, \;A \subseteq [n] \\
	\phi(A,S) = \frac{H(\bfc_A,S)}{H(S)}, \;A \subseteq [n]
\end{gather}
\end{subequations}
A polymatroid function must satisfy the following properties,
\begin{enumerate}[label=\textbf{P\arabic*},ref=(P\arabic*)]
	\item \label[property]{prop_matroid1} $\phi(A)\geq 0$ for all $A\subseteq Q$, $\phi(\emptyset)=0$
	\item \label[property]{prop_matroid2} $\phi$ is monotone i.e. $A\subseteq B \subseteq Q$, then $\phi(A)\leq \phi(B)$
	\item \label[property]{prop_matroid3} $\phi$ is submodular i.e. $\phi(A) + \phi(B) \geq \phi(A\cup B) + \phi(A\cap B)$ for any $A,B \subseteq Q$
\end{enumerate}
Note that, the definition in \cref{polymatroid} satisfies all the conditions above. In addition, the definition satisfies the following properties,
\begin{enumerate}[label=\textbf{P\alph*},ref=(P\alph*)]
	\item \label[property]{prop_matroid1a} $\phi({A,S}) = \phi(A)$, for every $A\in \cA_s$
	\item \label[property]{prop_matroid1b} $\phi({A,S}) = \phi(A)+1$, for every $A \notin \cA_s$
\end{enumerate}
which easily follow from the recovery and the security properties i.e. $H(S|\bfc_B)=H(S)$ and $H(S|\bfc_A)=0$, $A\in \cA_s$ and $B\in \cB_s = 2^{[n]} - \cA_s$ and the definition in \cref{polymatroid}.

Using \cref{prop_matroid1,prop_matroid2,prop_matroid3} and \cref{prop_matroid1a,prop_matroid1b} we have the following result, for any $A,B \in \cA_s$ such that $A\cap B \notin \cA_s$,
\begin{gather}
	\phi(A,S) + \phi(B,S) \geq \phi((A\cup B),S) + \phi((A\cap B),S)\nonumber \\
	\implies \phi(A) + \phi(B) \geq \phi(A\cup B) + \phi(A\cap B) +1 \label{matroid_property}
\end{gather}

Consider the set $M$ of size $\eta$ such that $(r+1) | \eta$ and it contains $\eta/(r+1)$ partitions $\cQ_j$. Another set $N \subseteq [n]\setminus M: \abs{N}=\nu\defeq2^\eta-(r+2)^{{\eta /(r+1)}}+1$ is chosen such that $\abs{N\cap \cQ_j} \leq r,\;\; \forall j$. The parameter $\eta$ for the size of the sets $M,N$ is chosen to be the largest possible, i.e. the maximum $\eta$ satisfying,
\begin{equation}\label{max_m}
 \eta-\floor*{\frac{\eta}{r+1}} + 2^\eta - (r+2)^{\eta/(r+1)} +1 \leq n\frac{r}{r+1}
\end{equation}

Now, construct a sequence $\set{M_i}_{i=0}^{\nu-1}$, for $M_i  \in 2^M$ of length $\nu$, such that it satisfies the following conditions for all sets $M_i$ in the sequence,
\begin{enumerate}[label=\textbf{C\arabic*},ref=C\arabic*]
	\item \label[condition]{cond_redundancy1} If for any partition $\cQ_j, \cQ_j\cap (M_i-M_{i+1} )\ne \emptyset$ and $\abs{\cQ_j \cap M_i} \geq r$, then $\abs{\cQ_j \cap M_{i+1}} < r$
	\item \label[condition]{cond_redundancy2} $M_i \not\subseteq M_{i^\prime}, i < i^\prime$ 
\end{enumerate}

To construct the sequence $\set{M_i}_i$ of length $\nu$ satisfying \cref{cond_redundancy1,cond_redundancy2}, we first construct a sequence $\set{M^\prime_i}_{i=0}^{2^\eta-1}$, $M^\prime_i \subseteq M : \abs{M^\prime_i}\leq \abs{M^\prime_{i+1}}$. It is easy to see that all subsequences of $\set{A^\prime_i}$ satisfy \cref{cond_redundancy2}. From this sequence we remove all sets $M_i^\prime, i\geq 1$ such that $\abs{(M_0-M_i^\prime) \cap \cQ_j} \leq 1$. Note that, the number of the sets removed is,
\begin{equation*}
	\sum_{1\leq i\leq \eta/r+1} {\eta/(r+1) \choose i}(r+1)^i = (r+2)^{\eta/(r+1)}-1 .
\end{equation*}
The sequence $\set{M_i}_i$ thus constructed has length $\nu$. To see that this sequence satisfies \cref{cond_redundancy1} note that $\abs{(M_0-M_i) \cap \cQ_j} > 1, \forall i\geq1$ implies that $\set{M_i}_i$ satisfies \cref{cond_redundancy1}. Thus the constructed sequence satisfies \cref{cond_redundancy1,cond_redundancy2}.

Let $N = \set{b_1, \ldots, b_{\nu-1}}$. Define another sequence of sets $N_i = \set{b_1, \ldots, b_i}, i \in [\nu-1]$ and $N_0=\emptyset$. Consider a monotone access structure $\cA_s$ that contains the sets $U_i \defeq M_i \cup N_i, i\in \set{0, \ldots, \nu-2}$. Let the minimal sets in this access structure be,
\begin{equation}\label{minimal_access_struct}
	\cA_s^\star = \set*{A \subseteq U_i : \abs{A\cap \cQ_j} = \min\set{\abs{A\cap \cQ_j}, r}, \;\forall i \in \brac*{\frac{n}{r+1}}} .
\end{equation}
Thus, $\cA_s \in \bbM_s$.

Consider the following sets $P = N_i\cup M$ and $Q=M_{i+1}\cup N_{i+1}$. Since $P\supseteq U_i$ and $Q\supseteq U_{i+1}$, $P,Q \in \cA_s$. Now, $P\cap Q = N_i\cup M_{i+1}$. From \cref{cond_redundancy1,minimal_access_struct}, we see that there exists a set $A^\star \in \cA_s^\star, A^\star \subseteq U_i$ such that $P\cap Q \subsetneq A^\star$. Therefore, $P\cap Q \notin \cA_s$. Applying \cref{matroid_property} on $P,Q$, we have,
\begin{align}\label{diff_eq}
	&\brac*{\phi(N_i\cup M) - \phi(N_i\cup M_{i+1})}\nonumber\\
	 & \qquad - \brac*{\phi(N_{i+1}\cup M) - \phi(N_{i+1}\cup M_{i+1})} \geq 1 .
\end{align}
Using \cref{prop_matroid3} we have,
\begin{equation}\label{tmp_diff_eqn}
	\phi(N_{i+1}\cup M_{i+1}) -  \phi(N_i\cup M_{i+1}) \geq \phi(N_{i+1}) -  \phi(N_i) .
\end{equation}
Thus, combining \cref{diff_eq,tmp_diff_eqn} we have,
\begin{gather}\label{diff_eq}
	\brac*{\phi(N_i\cup M) - \phi(N_i)} - \brac*{\phi(N_{i+1}\cup M) - \phi(N_{i+1})} \geq 1 .
\end{gather}
Adding \cref{diff_eq} for $ i \in \set{0,\ldots,\nu-3}$ we have,
\begin{equation}\label{final_eq}
	\phi(M) - [\phi(N_{\nu-2} \cup M) - \phi(N_{\nu-2})] \geq \nu-2 .
\end{equation}
Thus, from the recoverability property we have $\phi(M) \leq \eta r/(r+1)\alpha$. Since, $M\in \cA_s$ and $N_{\nu-2} \not\in \cA_s$, $\phi(N_{\nu-2} \cup M) - \phi(N_{\nu-2}) \geq 1$. Thus, we have from \cref{final_eq},
\begin{equation}\label{upperbound}
	\alpha \geq (r+1)\frac{2^\eta-(r+2)^{\eta/(r+1)}}{\eta r} H(S) .
\end{equation}
Since, $\eta=\Omega(\log{n})$ and $(r+2)^{1/(r+1)} <2$ from \cref{max_m}, \cref{upperbound} asympototically (with $n$) gives,
\begin{equation*}
	\alpha \geq \br*{\frac{r+1}{r}} \frac{n}{\log{n}} H(S).
\end{equation*}
\end{IEEEproof}


\begin{appendices}
\crefalias{section}{appsec}

\section{Proof of \cref{lemma1}}\label[Appendix]{proof_lemma1}
Consider the submatrix $H_{\ell\times \br{k+\ell}}$ of $G$ corresponding to $\ell$ rows, $I_\ell \subseteq [n]$. Assume that the eavesdropper observes $I_\ell$. Wlog assume that $\rank\br{H} = \ell$, since the eavesdropper effectively observes $\rank\br{H}$ shares.

"$\impliedby$" Assume that any $\ell$ rows of $G^1$ corresponding to $\ell$ L.I. rows of $G$ are L.I. Thus, $\rank\br{H_1}=\ell$ by assumption. Let $\bfc = G \bfa$ and  $H=[H_1 \;\;H_2]$ where $H_1$ is $\ell \times \ell$ and $H_2$ is $\ell \times k$. Then,
\begin{equation}\label{full_rank_sec}
	 H_1 \bfr  = \bfc_{I_\ell} - H_2 \bfs
\end{equation}
Now, given $c_{I_\ell}$, for every $\bfs$ there is a unique solution to $\bfr = {H_1}^{-1}(\bfc_{I_\ell} - H_2 \bfs)$. Since, each of those vectors are equally probable the eavesdropper does not get any information about $\bfs$.

"$\implies$"
Conversely, suppose that $H_1$ is not full rank. (but $\rank\br{H}=\ell$ by assumption). If for a given $\bfc_{I_\ell}$ there does not exist a solution to \cref{full_rank_sec} for some $\bfs \in \ff_q^k$ then $H(\bfs | \bfc_{I_\ell}) < H(\bfs)$. This happens iff for some $\bfa \in \ff_q^{k+\ell}$,
\begin{equation}\label{fullrank_cond}
	H \bfa - \colspan(H_2) \not\subseteq \colspan(H_1)
\end{equation}
where $\colspan(.)$ denotes the column span of a matrix and $H \bfa - \colspan(H_2) = \set{H \bfa - \bfv: \bfv \in \colspan(H_2)}$. Now, $\colspan(H_2) \not\subseteq \colspan(H_1)$ since $\dim\br{\colspan({H_1,H_2})} = \ell$ and $\dim\br{\colspan(H_1)} < \ell$ by assumption. Thus, \cref{fullrank_cond} is satisfied for $\bfa=\mathbf{0}$ which implies that in this case the eavesdropper does get some information about $\bfs$.

\section{Achievability using Linear Network Codes}\label{sec:achievability_nklmr}

In this appendix, we show that the limit derived in \cref{easy_upper_bound} is achievable using a random {\em linear network code} (LNC).
The rest of this section is devoted to the proof of  \cref{thm:achievability} via
 the technique provided in  \cite{papailiopoulos2012locally}. 
We assume that $k_0$ is such that,
\begin{align}\label{init_k}
m = k_0 + k_0/r-1
\end{align}
For simplicity, further assume that $r$ divides $k_0$ and $(r+1)$ divides $n$. 

Our roadmap for the proof is the following. We 
 analyze the network flow graph in \cref{fig:flow_graph}, that has
been adapted and modified from \cite{papailiopoulos2012locally}.
We first show that this graph has multicast capacity $k_0$.
Further there exists an LNC for this graph which corresponds to an 
$(n,k_0, 0,m,r)$-secret sharing scheme. 
Then,  we  impose additional constraints on the LNC for
 the graph in \cref{fig:flow_graph} to get an $\ell$-secure scheme, i.e.,  an $(n, k=k_0-\ell, \ell,m,r)$-scheme.
Clearly this satisfies \cref{eq:trivial}.

%

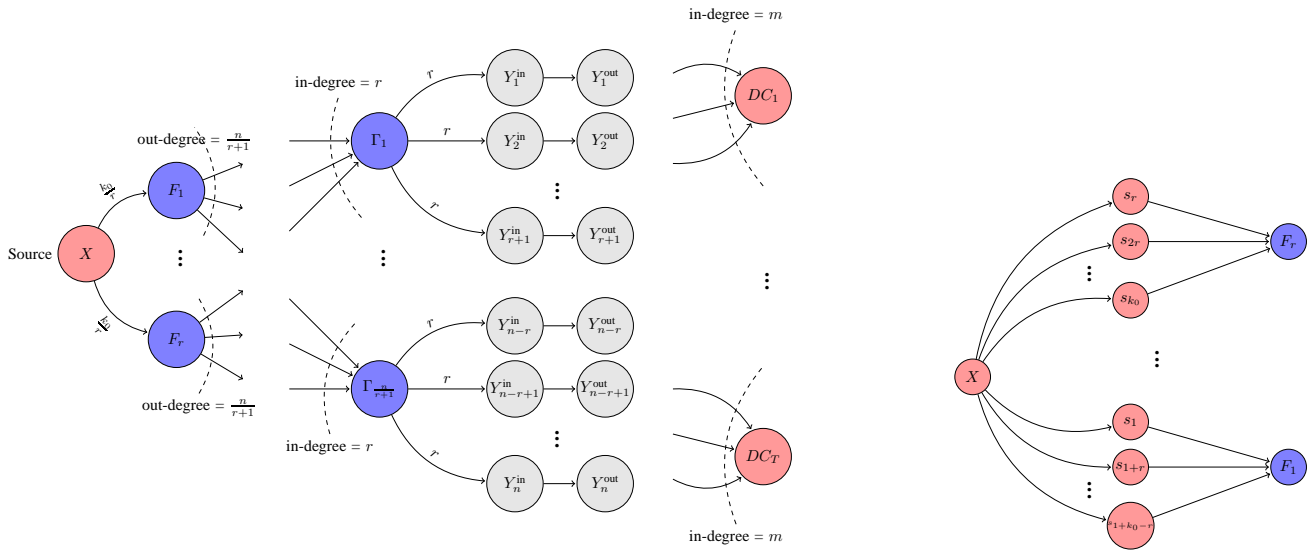
\begin{figure*}[htbp]
\hrulefill
\begin{center}
\scalebox{0.6}{
\begin{tikzpicture}
[every place/.style={minimum size=12.5mm, minimum width=12.5mm}]
\node[ minimum size=55mm, place,draw=black,  fill=red!40, label = left:{Source}] (source) at (-2-3,-2.5) [place] {$X$} ;

\node[ minimum size=12.5mm, place,draw=black,  fill=blue!50] (f1) at (-3,-1.1) [place] {$F_1$} ;
\path  (f1) edge[semithick, post] node [below, midway]  {} (1.5-3,-2.5+2);
\path  (f1) edge[semithick, post] node [below, midway]  {} (1.5-3,-2.5+1); 
\path  (f1) edge[semithick, post] node [below, midway]  {} (1.5-3,-2.5);
\path  (-2.5+.1,-1-1.1) edge[semithick, dashed, bend right] node [below, at end]  {out-degree = $\frac{n}{r+1}$ } (-.1-2.5, 1.4-1.1);
\node[text width=4cm] at (-1,-2.5) {\huge $\vdots$};
\node[ minimum size=12.5mm, place,draw=black,  fill=blue!50] (fr) at (-3,-4.4) [place] {$F_r$} ;
\path  (fr) edge[semithick, post] node [above, midway]  {} (1.5-3,-2.3-3);
\path  (fr) edge[semithick, post] node [above, midway]  {} (1.5-3,-2.3-2);
\path  (fr) edge[semithick, post] node [above, midway]  {} (1.5-3,-2.3-1);
\path  (-2.5,-5.5-.1) edge[semithick, dashed, bend right] node [below, at start]  {out-degree = $\frac{n}{r+1}$ } (-2.5, 2-5.5);
\path (source) edge[semithick, post, bend left] node [midway, sloped, above] {$\frac{k_0}{r}$} (f1);
\path (source) edge[semithick, post, bend right] node [midway, sloped, below] {$\frac{k_0}{r}$} (fr);
\node[ minimum width=15mm, place,draw=black,  fill=blue!50] (gamma1in) at (1.5,0) [place] {$\Gamma_1$} ;
\path  (-.5,-1.5+1.5) edge[semithick, post] node [above, midway]  {} (gamma1in);
\path  (-.5,-1.5+0.5) edge[semithick, post] node [above, midway]  {} (gamma1in);
\path  (-.5,-1.5-0.5) edge[semithick, post] node [above, midway]  {} (gamma1in);
\path  (1.1,-2+.3) edge[semithick, dashed, bend left] node [above, at end]  {in-degree = $r$ } (-0.6+1.2, 1);
\node[ minimum width=15mm, place,draw=black,  fill=black!10] (Y1in) at (3.5+1,1.3+0.1) [place] {$Y_1^{\text{in}}$} ;
\node[ minimum width=15mm, place,draw=black,  fill=black!10] (Y1out) at (5.5+1,1.3+0.1) [place] {$Y_1^{\text{out}}$} ;
\node[ minimum width=15mm, place,draw=black,  fill=black!10] (Y2in) at (3.5+1,0) [place] {$Y_2^{\text{in}}$} ;
\node[ minimum width=15mm, place,draw=black,  fill=black!10] (Y2out) at (5.5+1,0) [place] {$Y_2^{\text{out}}$} ;
\node[ minimum width=15mm, place,draw=black,  fill=black!10] (Yrin) at (3.5+1,-2-0.1) [place] {$Y_{r+1}^{\text{in}}$} ;
\node[ minimum width=15mm, place,draw=black,  fill=black!10] (Yrout) at (5.5+1,-2-0.1) [place] {$Y_{r+1}^{\text{out}}$} ;
\node[text width=4cm] at (7.35,-1) {\huge $\vdots$};
\path (Y1in) edge[semithick, post] node [above, midway]  {} (Y1out);
\path (Y2in) edge[semithick, post] node [above, midway]  {} (Y2out);
\path (Yrin) edge[semithick, post] node [above, midway]  {} (Yrout);
\path (gamma1in) edge[semithick, post, bend left] node [midway, sloped, above] {$r$} (Y1in);
\path (gamma1in) edge[semithick, post] node [midway, above] {$r$} (Y2in);
\path (gamma1in) edge[semithick, post, bend right] node [midway, sloped, above] {$r$} (Yrin);
\node[text width=4cm] at (3.5,-2.5) {\huge $\vdots$};

\node[ minimum width=15mm, place,draw=black,  fill=blue!50] (gammarin) at (1.5,0-5.5) [place] {$\Gamma_{\frac{n}{r+1}}$} ;
\path  (-.5,-6.5+3) edge[semithick, post] node [above, midway]  {} (gammarin);
\path  (-.5,-6.5+2) edge[semithick, post] node [above, midway]  {} (gammarin);
\path  (-.5,-6.5+1) edge[semithick, post] node [above, midway]  {} (gammarin);
\path  (-0.6+1,-1-5.5) edge[semithick, dashed, bend left] node [below, at start]  {in-degree = $r$ } (1, 1.5-5.5	);
\node[ minimum width=15mm, place,draw=black,  fill=black!10] (Yr1in) at (3.5+1,1.3+0.1-5.5) [place] {$Y_{n-r}^{\text{in}}$} ;
\node[ minimum width=15mm, place,draw=black,  fill=black!10] (Yr1out) at (5.5+1,1.3+0.1-5.5) [place] {$Y_{n-r}^{\text{out}}$} ;
\node[ minimum width=15mm, place,draw=black,  fill=black!10] (Yr2in) at (3.5+1,0-5.5) [place] {$Y_{n-r+1}^{\text{in}}$} ;
\node[ minimum width=15mm, place,draw=black,  fill=black!10] (Yr2out) at (5.5+1,0-5.5) [place] {$Y_{n-r+1}^{\text{out}}$} ;
\node[ minimum width=15mm, place,draw=black,  fill=black!10] (Yrrin) at (3.5+1,-2-0.1-5.5) [place] {$Y_n^{\text{in}}$} ;
\node[ minimum width=15mm, place,draw=black,  fill=black!10] (Yrrout) at (5.5+1,-2-0.1-5.5) [place] {$Y_n^{\text{out}}$} ;
\node[text width=4cm] at (7.35,-6.5) {\huge $\vdots$};
\path (Yr1in) edge[semithick, post] node [above, midway]  {} (Yr1out);
\path (Yr2in) edge[semithick, post] node [above, midway]  {} (Yr2out);
\path (Yrrin) edge[semithick, post] node [above, midway]  {} (Yrrout);
\path (gammarin) edge[semithick, post, bend left] node [midway, sloped, above] {$r$} (Yr1in);
\path (gammarin) edge[semithick, post] node [midway, above] {$r$} (Yr2in);
\path (gammarin) edge[semithick, post, bend right] node [midway, sloped, above] {$r$} (Yrrin);
\node[ minimum width=15mm, place,draw=black,  fill=red!40] (DC1) at (9+1,1) [place] {$DC_1$} ;
\node[ minimum width=15mm, place,draw=black,  fill=red!40] (DC4) at (9+1,-7) [place] {$DC_T$} ;
\path  (7+1,1.5) edge[semithick, post, bend left] node [above, midway]  {} (DC1);
\path  (7+1,0.5) edge[semithick, post] node [above, midway]  {} (DC1);
\path  (7+1,-0.5) edge[semithick, post, bend right] node [above, midway]  {} (DC1);
\path  (9+1,-1) edge[semithick, dashed, bend left] node [above, at end]  {in-degree = $m$ } (8.4+1, 2.5	);
\node[text width=4cm] at (12,-3) {\huge $\vdots$};
\path  (7+1,1.5-7) edge[semithick, post, bend left] node [above, midway]  {} (DC4);
\path  (7+1,0.5-7) edge[semithick, post] node [above, midway]  {} (DC4);
\path  (7+1,-0.5-7) edge[semithick, post, bend right] node [above, midway]  {} (DC4);
\path  (8.4+1,-1-7-0.5) edge[semithick, dashed, bend left] node [below, at start]  {in-degree = $m$ } (9+1, 2.5-7-0.5);
\end{tikzpicture}
\begin{tikzpicture}
\node[ minimum width=50mm, place,draw=black,  fill=red!40] (gamma1in) at (3,0) [place] {$X$} ;

\node[ minimum width=80mm, place,draw=black,  fill=red!40] (s11) at (5.5+1,-1) [place] {$s_{1}$} ;
\node[ minimum width=80mm, place,draw=black,  fill=red!40] (s12) at (5.5+1,-2) [place] {$s_{1+r}$} ;
\node[text width=4cm] at (5.5+2,-2.4) {\huge $\vdots$};
\node[ minimum width=80mm, place,draw=black,  fill=red!40] (s1r) at (5.5+1,-3.3) [place] {\tiny $ s_{1+k_0-r}$\normalsize } ;
\node[ minimum width=15mm, place,draw=black,  fill=blue!50] (LD1) at (10,-2) [place] {$F_1$} ;

\node[text width=4cm] at (9,0.5) {\huge $\vdots$};

\node[ minimum width=80mm, place,draw=black,  fill=red!40] (sr1) at (5.5+1,4) [place] {$s_{r}$} ;
\node[ minimum width=80mm, place,draw=black,  fill=red!40] (sr2) at (5.5+1,3) [place] {$s_{2r}$} ;
\node[text width=4cm] at (5.5+2,2.4) {\huge $\vdots$};
\node[ minimum width=80mm, place,draw=black,  fill=red!40] (srr) at (5.5+1,1.7) [place] {$ s_{k_0}$} ;
\node[ minimum width=15mm, place,draw=black,  fill=blue!50] (LD2) at (10,3) [place] {$F_r$} ;

\path (gamma1in) edge[semithick, post,bend right] node [midway, sloped, above] {} (s11);
\path (gamma1in) edge[semithick, post,bend right] node [midway, above] {} (s12);
\path (gamma1in) edge[semithick, post,bend right] node [midway, sloped, above] {} (s1r);
\path (s11) edge[semithick, post] node [above, sloped, midway]  {} (LD1);
\path (s12) edge[semithick, post] node [above, sloped, midway]  {} (LD1);
\path (s1r) edge[semithick, post] node [above, sloped, midway]  {} (LD1);

\path (gamma1in) edge[semithick, post,bend left] node [midway, sloped, above] {} (sr1);
\path (gamma1in) edge[semithick, post,bend left] node [midway, above] {} (sr2);
\path (gamma1in) edge[semithick, post,bend left] node [midway, sloped, above] {} (srr);
\path (sr1) edge[semithick, post] node [above, sloped, midway]  {} (LD2);
\path (sr2) edge[semithick, post] node [above, sloped, midway]  {} (LD2);
\path (srr) edge[semithick, post] node [above, sloped, midway]  {} (LD2);

\end{tikzpicture} 

}
\end{center}

\caption{ 
Left: The  information flow-graph $\mathcal{G}(n,k_0,m,r)$ adapted from \cite{papailiopoulos2012locally}. 
The left-most vertex is the source node $X$.
The  $T={n \choose m}$  vertices $\text{DC}_\mu$ are the destination nodes (referred to as the data collectors).
Each DC is connected to a different  $m$-tuple of $Y_{i}^{\text{out}}$ nodes.
Each of the intermediate nodes $F_\nu, \nu \in [r]$ have out-going edges to all the nodes $\Gamma_\rho, \rho \in \brac*{\frac{n}{r+1}}$.
Right: Equivalent representation for the subgraph containing nodes $F_\nu$ and the source $X$.
}
\label{fig:flow_graph}
\hrulefill
\end{figure*}

We start by describing the graph in \cref{fig:flow_graph} (Left). This graph, $\cG(n,k_0,m,r)$  
consists of a source node $X$ that transmits  $k_0$ $q$-ary symbols to $T={n \choose m}$
data collectors $DC_\mu, \mu \in [T]$. 
We assume that $X$ transmit the secret $\bfs \in \ff_q^{k_0}$.
The unit for the edge capacity is taken to be one $q$-ary 
symbol per channel use. The nodes $F_\nu, \nu \in [r]$ connect to the source $X$ through
 links with capacity $k_0/r$. The edges that connect $\Gamma_\rho, \rho \in \brac{\frac{n}{r+1}}$ to $Y^{\text{in}}_i, i \in [n]$,
 has capacity $r$.
 All the rest of the edges have unit capacity. Each of 
 $\Gamma_\rho, \rho \in \brac{\frac{n}{r+1}}$ have $r$ incoming edges from $F_\nu, \nu \in [r]$.
  The edges $(X,F_\nu)$ are broken into $k_0/r$ unit capacity edges and labelled $s_1, s_2, \ldots, s_{k_0}$ as  shown in the subgraph in \cref{fig:flow_graph} (Right). Node $F_\nu$ connects to the source $X$ through edges $\set{s_{\nu+(\lambda-1)r}}_{\lambda=1}^{k_0/r}, \nu \in [r]$. Let us denote the subset of nodes $\{\Gamma_\rho, \set{Y^{\text{in}}_{(\rho-1)(r+1)+j} }_{j=1}^{r+1}$, $\set{Y^{\text{out}}_{(\rho-1)(r+1)+j} }_{j=1}^{r+1}\}$ as the $\rho^{th}$ repair group. 

A single network use corresponds to a sequence of single data transmission on every edge. Assume that, data transmitted on the edges 
$(Y_i^{\text{in}},Y_i^{\text{out}}), i\in [n]$ in a single network use correspond to the $n$ shares of the secret (i.e., $n$ symbols of $f(\bfs)$, where $f$ is the randomized encoding). Note that, the data collectors connect to $m$ nodes (shares) and obtain all of what $X$ transmits: this must be satisfied for all $m$-subsets (all data collectors). 
We use the network $\cG(n,k_0,m,r)$ to show the existence of a linear $(n,k_0,0,m,r)$-secret sharing scheme.

{\lemma Given that the network $\cG(n,k_0,m,r)$ has multicast capacity  $k_0$, there exists a linear network code with repairability $r$ for this network and the scheme corresponding to the data transmitted on the edges $(Y^{in}_i,Y^{out}_i)$ is an $(n,k_0,0,m,r)$-secret sharing scheme. }

In the following we show that the network $\cG(n,k_0,m,r)$ has multicast capacity $k_9$. 

\begin{definition} A min-cut for any two nodes $v,u$ in  $\cG(n,k_0,m,r)$, denoted $\text{\rm MinCut}(v,u)$,  is defined as a subset of directed edges of minimum aggregate capacity such that if these edges are removed, then there does not exist a path from $v$ to $u$ in the graph $\cG(n,k_0,m,r)$. Let $\abs{\text{\rm MinCut}(v,u)}$ denote the aggregate capacity of the edges in $\text{\rm MinCut}(v,u)$.
\end{definition}

It has been shown \cite{ho2006random,ahlswede2000network} that the minimum of the min-cuts between a single source and multiple sinks corresponds to the 
{\em multicast capacity} of the source. We show that for $\cG(n,k_0,m,r)$ this quantity, $\min_{\mu\in [T]}$ $ \abs{\text{MinCut}(X,DC_\mu)}$, is equal to $k_0$.
\begin{lemma}
 For $\cG(n,k_0,m,r)$ the  multicast capacity is $k_0$. That is,
\begin{equation}\label{mincut}
\min_{\mu\in [T]} \abs{\text{\rm MinCut}(X,DC_\mu)} = k_0.
\end{equation}
\end{lemma}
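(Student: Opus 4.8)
The plan is to evaluate $\min_{\mu\in[T]}\abs{\mathrm{MinCut}(X,DC_\mu)}$ by proving the two matching inequalities. The upper bound is immediate: the only edges leaving $X$ are the $r$ edges $(X,F_\nu)$, each of capacity $k_0/r$, and deleting all of them disconnects $X$ from every data collector; hence $\abs{\mathrm{MinCut}(X,DC_\mu)}\le r\cdot k_0/r=k_0$ for all $\mu$. Everything then rests on showing that every edge cut separating $X$ from a fixed $DC_\mu$ has total capacity at least $k_0$.

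For the lower bound, fix $DC_\mu$, attached to the output nodes $Y_i^{\text{out}}$ with $i\in I_\mu$, $\abs{I_\mu}=m$. Partition $[n]$ into the $n/(r+1)$ repair groups $B_1,\dots,B_{n/(r+1)}$ and set $a_\rho=\abs{I_\mu\cap B_\rho}$, so $\sum_\rho a_\rho=m$ and $0\le a_\rho\le r+1$. Let $C$ be an arbitrary cut, let $S$ be the set of vertices still reachable from $X$ after deleting $C$, put $J=\set{\nu:F_\nu\notin S}$ and $j=\abs{J}$. Since $X$ is the unique in-neighbour of each $F_\nu$, we have $(X,F_\nu)\in C$ for every $\nu\in J$, which already accounts for $j\cdot k_0/r$ units of $\abs{C}$. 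I would then work group by group: for $\nu\notin J$ and $i\in I_\mu\cap B_\rho$, the path $X\to F_\nu\to\Gamma_\rho\to Y_i^{\text{in}}\to Y_i^{\text{out}}\to DC_\mu$ uses, after its first edge, only edges internal to group $\rho$ (namely $F_\nu\to\Gamma_\rho$ with $\nu\notin J$, or $\Gamma_\rho\to Y_i^{\text{in}}$, $(Y_i^{\text{in}},Y_i^{\text{out}})$, $(Y_i^{\text{out}},DC_\mu)$ with $i\in B_\rho$), so $C$ must block all of them using edges internal to group $\rho$. Those internal edges form a cut of the subnetwork from $\set{F_\nu:\nu\notin J}$ to $DC_\mu$ inside group $\rho$, whose max-flow is $\min(a_\rho,r-j)$ — bounded by the $r-j$ unit edges entering $\Gamma_\rho$ and by the $a_\rho$ unit branches $(Y_i^{\text{in}},Y_i^{\text{out}})$ leaving it (the cap-$r$ edges out of $\Gamma_\rho$ are never a bottleneck). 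Since the edge sets internal to distinct repair groups are pairwise disjoint and disjoint from $\set{(X,F_\nu):\nu\in J}$, summing yields
\begin{equation*}
\abs{C}\;\ge\;j\cdot\frac{k_0}{r}+\sum_{\rho}\min(a_\rho,r-j).
\end{equation*}

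It then remains to verify the combinatorial inequality $j\cdot\frac{k_0}{r}+\sum_\rho\min(a_\rho,r-j)\ge k_0$ for every $j\in\set{0,1,\dots,r}$. For $j=r$ the first term alone is $k_0$. For $j<r$, the quantity $\sum_\rho\min(a_\rho,r-j)$ over all $\set{a_\rho}$ with $\sum a_\rho=m$ and $0\le a_\rho\le r+1$ is minimized by loading as few groups as possible up to the maximal value $a_\rho=r+1$; using $r\mid k_0$ and $m=k_0+k_0/r-1=(k_0/r)(r+1)-1$, this gives $k_0/r-1$ full groups plus one residual group of load $r$, so the sum is at least $(k_0/r-1)(r-j)+\min(r,r-j)=(k_0/r)(r-j)$ (there are enough repair groups for this configuration, since $(r+1)\mid n$ and $m\le n$ force $k_0/r\le n/(r+1)$). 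Hence $\abs{C}\ge j\cdot k_0/r+(k_0/r)(r-j)=k_0$. Together with the upper bound this gives $\abs{\mathrm{MinCut}(X,DC_\mu)}=k_0$ for every $\mu$, and since the multicast capacity of a single source equals the minimum of the min-cuts to its sinks \cite{ho2006random,ahlswede2000network}, the network $\cG(n,k_0,m,r)$ has multicast capacity $k_0$.

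The main obstacle is the lower-bound bookkeeping: one must charge the cut edges to repair groups correctly when only part of the $F_\nu$ layer lies on the source side of the cut, which forces the per-group bottleneck to drop from $\min(a_\rho,r)$ to $\min(a_\rho,r-j)$. The key point making the bound telescope to $k_0$ irrespective of $j$ is that the $j$ units of local capacity lost in each active group are exactly offset by the $j\cdot k_0/r$ already charged for the cut $(X,F_\nu)$ edges; and the weighted counting that converts $\sum a_\rho=m$ into $\sum\min(a_\rho,r-j)\ge(k_0/r)(r-j)$ is precisely where the defining relation $m=k_0+k_0/r-1$ is consumed.
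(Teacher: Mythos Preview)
Your proof is correct and follows essentially the same route as the paper: parametrize a cut by the number $j$ of edges $(X,F_\nu)$ it contains (the paper calls this $n_1$), show that the remaining per-group contribution is $\min(a_\rho,r-j)$, and then minimize over data collectors by concentrating the $a_\rho$'s using $m=(k_0/r-1)(r+1)+r$. Your write-up is more careful than the paper's---you separate the upper and lower bounds and justify the per-group max-flow step explicitly---but the structure and key idea are the same.
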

\begin{IEEEproof}
For $k_0$ satisfying \cref{init_k} we have, 
\begin{equation}\label{m}
m = k_0 + \frac{k_0}{r}-1 = (k_0/r-1) (r+1) + r.
\end{equation}
Suppose that the minimum in \cref{mincut} only contains an $n_1$-subset $\cE$ of edges in $\set*{(X,F_\nu)}_{\nu \in [r]}$. Assume wlog that $\cE = \set{(X,F_1),\ldots,(X,F_{n_1})}$. Consider the data collector $DC_\mu$ that connects to $\gamma_\rho, \rho \in \brac{n/(r+1)}$ nodes in each of the repair groups. If $\gamma_\rho \geq r-n_1$ the min-cut should include all the edges $\set{(F_{n_1+1}, \Gamma_\rho),\ldots,(F_{r},\Gamma_\rho)}$. Otherwise if $\gamma_\rho < r-n_1$ the min-cut includes all the $\gamma_\rho$ edges $(Y^{in}_i,Y^{out}_i)$ in the $\rho^{th}$ repair group connected to $DC_\mu$. Therefore, the minimum in \cref{mincut} would correspond to the data collector that covers entirely as many repair groups as possible. From \cref{m} we see that for a such data collector $\gamma_\rho \geq (r-n_1)$ for all $\rho$ for which $\gamma_\rho>0$ and for all $0\leq n_1 \leq r$. Therefore, 
$$\min_\mu \abs{\text{MinCut}(X,DC_\mu)} =  \frac{k_0}{r} (r-n_1) + n_1 \frac{k_0}{r} = k_0$$
\end{IEEEproof}

We know therefore that a random LNC achieves the multicast capacity $k_0$ for this network. This random LNC corresponds to a secret-sharing scheme with $n$
shares such that the secret in $\ff_q^{k_0}$ can be recovered by looking at any $m$ shares.
 Now to satisfy the local repairability constraint for this LNC, consider the subgraph containing the nodes in the $\rho^{th}$ repair group. Another set of local decoding requirements are imposed on this subgraph. For each $r$-subset of nodes in any local repair group, a local data collector $LD_i, i \in [n]$ connecting to these nodes should be able to decode the input to $\Gamma_\rho$. 
 There are in total $n$ such local decoding requirements.
   These decoding requirements are similar to the local repairability requirements for the network flow graph considered in \cite{papailiopoulos2012locally}. Let $\bfz_\rho \in \bbF_q^r$ denote the data received by $\Gamma_\rho$. Let $N_i$ denote the $r\times r$ local encoding matrix, for the edges $\set{ (\Gamma_\rho, Y^{in}_{{(\rho-1)(r+1)+j}}) }_{j \in [r+1]\setminus\set{i}}$ corresponding to $i^{th}$ local data collector.  Therefore, the data received by the $i^{th}$ local decoder is,
\begin{equation}\label{repairability_constr}
\bfz_\rho N_i, i\in \set*{(\rho-1)(r+1)+1, \ldots, \rho(r+1)}
\end{equation}
We see that, for any local data collector $LD_i$ to recover the data from the node $\Gamma_\rho$ matrix $N_i$ must be full rank. Since we know that for a large enough alphabet size $q$ we can satisfy these constraints \cite[lemma 4]{papailiopoulos2012locally}, there must exist an 
LNC that satisfies the local repair requirements. Therefore, we can construct an $(n,k_0,0,m,r)$-secret-sharing scheme. 


Suppose we write the secret as $\bfs = (s_1, \dots, s_{k_0})$, and term $s_1, \ldots, s_{k_0}$ as the information	symbols.
Now, for the random LNC obtained above that satisfy
the repairability and recovery requirements, we relabel $k = k_0-\ell$
 information symbols $\set{s_{\ell+1}, \ldots, s_{k_0}}$ from the source $X$ as 
secure information symbols and the choose each of the rest $\ell$ 
symbols $\set{s_1, \ldots, s_{\ell}}$ according to a uniformly random distribution 
in $\ff_q$. For such a  random LNC to be $\ell$-secure any eavesdropper 
$ED_\tau, \tau \in [{n \choose \ell}]$ connecting to any
 $\ell$ nodes $Y^{out}_i$ may
be able to recover at most the redundant $\ell$ symbols $\set{s_1, \ldots, s_{\ell}}$
and should have full ambiguity about $\set{s_{\ell+1}, \ldots, s_{k_0}}$.
 We show that these additional security constraints can be satisfied for a random LNC with
large enough alphabet and hence we have an $(n,k,\ell,m,r)$-secret-sharing scheme  satisfying \cref{eq:trivial}. 

Note that if a code is secure against an eavesdropper who can observe any of the $\ell$ shares,
 it must be secure against any adversary who can only observe less than $\ell$ shares.
 Therefore, for $\ell>r$ we can ignore all eavesdroppers who choose all the $(r+1)$ shares 
 of the same	repair group. Since one of the shares
  in a repair group can be recovered from the other $r$ shares,
   an eavesdropper who reads $t$ entire repair groups  is observing
  effectively  only $\ell-t$ shares.
 Therefore, we only need to consider the eavesdroppers that observe a maximum
of $r$ shares in a repair group. Let us denote this
 sub-set of eavesdropper as $ED_\tau, \tau \in \cW^\prime, \cW^\prime \subseteq [{n \choose \ell}]$.

If $(c_1, \dots , c_n)$ are the $n$ shares
for the secret $\bfs$, we must have
the data transmitted on the edges $(Y^{in}_i,Y^{out}_i)$ with the following linear form,
\begin{equation}\label{coefF_matrix}
 \begin{pmatrix}
  c_1 \\
  \vdots \\
  c_{n}
 \end{pmatrix}	=
 \begin{pmatrix}
  a_{1,1} & a_{1,2} & \cdots & a_{1,{k_0}} \\
  a_{2,1} & a_{2,2} & \cdots & a_{2,{k_0}} \\
  \vdots  & \vdots  & \ddots & \vdots  \\
  a_{n,1} & a_{n,2} & \cdots & a_{n,{k_0}}
 \end{pmatrix}	
 \begin{pmatrix}
  s_1 \\
  \vdots \\
  s_{k_0}
 \end{pmatrix} = A \bf{s}.
\end{equation}
We claim that the security against an eavesdropper $ED_\tau, \tau \in \cW^\prime$ 
is equivalent to a full-rank requirement on a $\ell\times \ell$ sub-matrix of $A$.

\begin{lemma}\label{independence_lemma}
Let $\cE^\tau = \set{e^\tau_1, e^\tau_2, \ldots, e^\tau_{\ell}} \subseteq [n]$ denotes  the shares  an eavesdropper $ED_\tau$ can observe. We have,
\begin{align}
 \bfc_{\cE^\tau}	= A^\tau_1 \bfs_{[\ell]}+ A^\tau_2 \bfs_{[k_0]\setminus [\ell]}.
 \end{align}
 If for all eavesdroppers $ED_\tau, \tau \in \cW^\prime$ the $\ell \times \ell$ matrix $A^\tau_1$ is full-rank then the LNC  is $\ell$-secure.
\end{lemma}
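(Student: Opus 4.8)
The plan is to mirror, eavesdropper by eavesdropper, the ``$\impliedby$'' direction of \cref{lemma1} together with its accompanying secrecy lemma \cite[Lemma~4]{rawat2012optimal}. First I would fix notation: write $\bfR$ for the random vector of the uniformly chosen symbols $\bfs_{[\ell]} = (s_1,\dots,s_\ell)$, write $\bfS$ for the secure symbols $\bfs_{[k_0]\setminus[\ell]} = (s_{\ell+1},\dots,s_{k_0})$, and write $\bfC_{\cE^\tau}$ for the vector of shares observed by $ED_\tau$. I will use that $\bfR$ is uniform on $\ff_q^\ell$ and independent of $\bfS$, that $\bfC$ is a deterministic linear image of $(\bfR,\bfS)$ via \cref{coefF_matrix}, and that $\bfc_{\cE^\tau} = A_1^\tau \bfR + A_2^\tau \bfS$.

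The core step is to fix $\tau\in\cW^\prime$ with $A_1^\tau$ full rank and show $H(\bfS\mid\bfC_{\cE^\tau}) = H(\bfS)$. For this I would observe that, given $\bfC_{\cE^\tau}$, every candidate value $\bfs$ of $\bfS$ determines $\bfr$ uniquely as $(A_1^\tau)^{-1}(\bfC_{\cE^\tau}-A_2^\tau\bfs)$, so $H(\bfR\mid\bfS,\bfC_{\cE^\tau}) = 0$; and since $\bfR$ is uniform over $\ff_q^\ell$, $H(\bfC_{\cE^\tau})\le\abs{\cE^\tau} = \ell = H(\bfR)$. Feeding these two facts into the equality chain \cref{secrecy_lemma} (with $\cE$ replaced by $\cE^\tau$) collapses it to $H(\bfS)+H(\bfR) = H(\bfC_{\cE^\tau}) + H(\bfS\mid\bfC_{\cE^\tau}) \le H(\bfR) + H(\bfS\mid\bfC_{\cE^\tau})$, whence $H(\bfS)\le H(\bfS\mid\bfC_{\cE^\tau})$ and, with the trivial reverse inequality, equality. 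So the LNC leaks nothing to any eavesdropper indexed by $\cW^\prime$.

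The last step is to upgrade ``$\ell$-secure against $\cW^\prime$'' to full $\ell$-security (\cref{cond2}) for every $\ell$-subset $J\subseteq[n]$; this reduction was already sketched informally before the statement, and I would make it precise as follows. Whenever $J$ contains an entire repair group I would peel off an $i\in J$ recoverable from the other $r$ members of that group, using $H(\bfC_i\mid\bfC_{J\setminus\set{i}}) = 0$ to get $H(\bfS\mid\bfC_J) = H(\bfS\mid\bfC_{J\setminus\set{i}})$; iterating yields $J'\subseteq J$ with $\abs{J'}\le\ell$, at most $r$ coordinates inside each $\cQ_\rho$, and $H(\bfS\mid\bfC_J) = H(\bfS\mid\bfC_{J'})$. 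Because the repair groups can jointly hold at most $r$ coordinates each, i.e.\ $\tfrac{nr}{r+1}\ge k_0-1\ge\ell$ such slots in total (using $m\le n$ and $\ell=k_0-k\le k_0-1$), I would pad $J'$ back up to an $\ell$-set $\widehat J$ that is still observed by some $ED_\tau$ with $\tau\in\cW^\prime$; monotonicity of conditional entropy then forces $H(\bfS)\ge H(\bfS\mid\bfC_{J'})\ge H(\bfS\mid\bfC_{\widehat J}) = H(\bfS)$, so $H(\bfS\mid\bfC_J) = H(\bfS)$. Combined with the recovery and locality already built into the LNC, this gives the claimed $(n,k,\ell,m,r)$-secret sharing scheme.

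I expect the middle step to be essentially free once $A_1^\tau$ is invertible, since the secrecy lemma does all the information-theoretic work; the only place where care is needed is the bookkeeping in the peeling/padding reduction of the last paragraph (tracking which coordinates are dropped without losing $H(\bfS\mid\bfC_J)$, and confirming the padded set really lands in the family indexed by $\cW^\prime$). I would also flag, separately from this lemma, that its hypothesis is not automatic: exhibiting a single LNC over a large enough $\ff_q$ that makes all the matrices $A_1^\tau$, $\tau\in\cW^\prime$, simultaneously full rank is a union-bound/Schwartz--Zippel counting argument of the same flavour as the one invoked via \cite[Lemma~4]{papailiopoulos2012locally} for the repairability constraints.
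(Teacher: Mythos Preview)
Your proposal is correct and follows essentially the same route as the paper: the core step---deducing $H(\bfR\mid\bfS,\bfC_{\cE^\tau})=0$ from invertibility of $A_1^\tau$, combining it with $H(\bfC_{\cE^\tau})\le\ell=H(\bfR)$, and concluding $I(\bfS;\bfC_{\cE^\tau})=0$---is exactly the paper's argument, just packaged via the secrecy-lemma chain \cref{secrecy_lemma} rather than the equivalent direct mutual-information computation the paper writes out. Your explicit peeling/padding reduction from $\cW^\prime$ to arbitrary $\ell$-subsets is a welcome formalization of the informal paragraph the paper places just before the lemma, and is not part of the paper's proof of the lemma itself.
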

\begin{IEEEproof}
Suppose for some specific $\tau \in \cW'$,
{\footnotesize $$
 A^\tau_1 =  
 \begin{pmatrix}
  a_{{e_1} ,1} & a_{{e_1} ,2} & \cdots & a_{{e_1} ,{\ell}} \\
  a_{{e_2} ,1} & a_{{e_2} ,2} & \cdots & a_{{e_2} ,{\ell}} \\
  \vdots  & \vdots  & \ddots & \vdots  \\
  a_{{e_{\ell}} ,1} & a_{{e_{\ell}} ,2} & \cdots & a_{{e_{\ell}} ,{\ell}}
 \end{pmatrix};
 A^\tau_2 =
 \begin{pmatrix}
  a_{{e_1} ,\ell+1} & \cdots & a_{{e_1} ,{k_0}} \\
  a_{{e_2} ,\ell+1} & \cdots & a_{{e_2} ,{k_0}} \\
  \vdots  & \vdots  & \ddots   \\
  a_{{e_{\ell}} ,\ell+1} & \cdots & a_{{e_{\ell}} ,{k_0}}
 \end{pmatrix}.	
$$\small}
Since $A^\tau_1$ is full rank, there must be a unique solution to $s_{1}, s_{2}, \ldots ,s_{\ell}$  for every value	of $\bfc_{\cE^\tau}$ and every value of $\set{s_{\ell+1}, \ldots , s_{k_0}}\in\ff_q^{k_0}$. 
Hence, we have,
$$H(\bfs_{[\ell]}|\bfc_{\cE^\tau},\bfs_{[k_0]\setminus [\ell]})=0$$
We therefore have the following chain of inequalities that establishes that the eavesdropper does not get any information about the secret 
 from his observation. 
 
$
I(\bfs_{[k_0]\setminus [\ell]};\bfc_{\cE^\tau}) = H(\bfc_{\cE^\tau}) - H(\bfc_{\cE^\tau} |\bfs_{[k_0]\setminus [\ell]})
\leq \ell - H(\bfc_{\cE^\tau} |\bfs_{[k_0]\setminus [\ell]}) + H(\bfc_{\cE^\tau} |\bfs_{[\ell]},\bfs_{[k_0]\setminus [\ell]})
= \ell - I(\bfc_{\cE^\tau},\bfs_{[\ell]} |\bfs_{[k_0]\setminus [\ell]}) 
= \ell - H(\bfs_{[\ell]}| \bfs_{[k_0]\setminus [\ell]}) + H(\bfs_{[\ell]}|\bfc_{\cE^\tau},\bfs_{[k_0]\setminus [\ell]})
= \ell - H(\bfs_{[\ell]})
= \ell - \ell
= 0.
$
\end{IEEEproof}


We also have the following lemma.
\begin{lemma} \label{mincut_e} Consider the subgraph $\cG_e$
 formed by removing the edges $s_{\ell+1}, \ldots, s_{k_0}$ from the graph $\cG(n,k_0,m,r)$. For this modified
  network graph the multicast capacity between the source and the eavesdroppers $ED_\tau , \tau \in \cW^\prime$ is $\ell$ i.e.
$$\min_{\tau \in \cW^\prime} \abs{\text{\rm MinCut}(X,ED_\tau)} = \ell.$$
\end{lemma}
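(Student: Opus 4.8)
\noindent\textit{Proof plan.}
The plan is to show that $\abs{\mathrm{MinCut}(X,ED_\tau)}=\ell$ for \emph{every} $\tau\in\cW^\prime$; the displayed identity then follows by taking the minimum over $\tau$. One direction is trivial: after deleting $s_{\ell+1},\dots,s_{k_0}$, the only edges leaving $X$ in $\cG_e$ are $s_1,\dots,s_\ell$, so this set of $\ell$ unit-capacity edges separates $X$ from every sink, whence $\abs{\mathrm{MinCut}(X,ED_\tau)}\le\ell$.

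For the reverse inequality I would invoke the max-flow/min-cut theorem and construct an integral flow of value $\ell$ from $X$ to a fixed $ED_\tau$. Let $\cE^\tau\subseteq[n]$, $\abs{\cE^\tau}=\ell$, be the shares seen by $ED_\tau$, write $Q_\rho=\{(\rho-1)(r+1)+1,\dots,\rho(r+1)\}$ for the index set of the $\rho$th repair group, and put $t_\rho=\abs{\cE^\tau\cap Q_\rho}$, so that $\sum_\rho t_\rho=\ell$ and, crucially, $t_\rho\le r$ for all $\rho$ since $\tau\in\cW^\prime$. Because the unit edges $(Y^{\mathrm{in}}_i,Y^{\mathrm{out}}_i)$ and $(Y^{\mathrm{out}}_i,ED_\tau)$ are the only way into $ED_\tau$, a value-$\ell$ flow must carry exactly one unit through each observed share, hence exactly $t_\rho$ units through $\Gamma_\rho$; and since $\Gamma_\rho$ has only the $r$ unit-capacity incoming edges $\{(F_\nu,\Gamma_\rho)\}_{\nu\in[r]}$, building such a flow amounts to choosing, for every $\rho$, a set $S_\rho\subseteq[r]$ of size $t_\rho$ (the $F_\nu$'s feeding $\Gamma_\rho$) such that each $F_\nu$ stays within its supply, $\abs{\{\rho:\nu\in S_\rho\}}\le n_\nu$, where $n_\nu$ is the number of edges among $s_1,\dots,s_\ell$ entering $F_\nu$. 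Equivalently, I need a bipartite graph on the $F_\nu$'s and the repair groups with left-degree sequence $(n_\nu)_\nu$ and right-degree sequence $(t_\rho)_\rho$, after which $\cG_e$ routes each unit along $X\to F_\nu\to\Gamma_\rho\to Y^{\mathrm{in}}_i\to Y^{\mathrm{out}}_i\to ED_\tau$.

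The step I expect to require the most care is the existence of this bipartite realization. The source edges $s_1,\dots,s_\ell$ are spread over $F_1,\dots,F_r$ in round-robin fashion ($s_j$ enters $F_\nu$ with $\nu\equiv j\bmod r$), so $n_\nu\in\{\lfloor\ell/r\rfloor,\lceil\ell/r\rceil\}$ for every $\nu$, while $\sum_\nu n_\nu=\ell=\sum_\rho t_\rho$. Feeding $(n_\nu)$ and $(t_\rho)$ into the Gale--Ryser condition and using this near-uniformity of the $n_\nu$ together with the bound $t_\rho\le r$, one checks the required partial-sum inequalities hold at every threshold, so the bipartite graph, and hence the integral flow of value $\ell$, exists; note that it is precisely the bound $t_\rho\le r$ that fails for $\tau\notin\cW^\prime$, since a fully observed repair group would demand $r+1$ units leaving a $\Gamma_\rho$ with in-capacity only $r$ --- which is why the statement is restricted to $\cW^\prime$. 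By max-flow/min-cut this yields $\abs{\mathrm{MinCut}(X,ED_\tau)}\ge\ell$, hence equality for each $\tau\in\cW^\prime$, and the minimum over $\tau$ equals $\ell$. This capacity statement is exactly what is needed to drive a generic large-alphabet linear network code on $\cG_e$ to make the $\ell\times\ell$ matrix $A^\tau_1$ of \cref{independence_lemma} full rank for all $\tau\in\cW^\prime$.
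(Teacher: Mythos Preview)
Your proposal is correct and considerably more careful than the paper's own proof, which dispatches the lemma in two sentences: it simply asserts that ``from the network structure'' the min-cut for each $ED_\tau$, $\tau\in\cW'$, consists of the $\ell$ unit edges $(Y^{\mathrm{in}}_i,Y^{\mathrm{out}}_i)$ incident to that eavesdropper, and leaves the verification that no smaller cut exists to the reader.

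You take a different route on both sides. For the upper bound you cut at the source edges $\{s_1,\dots,s_\ell\}$ rather than at the storage-node edges; both cuts have capacity $\ell$, and yours makes transparent that the bound depends only on how many source edges survive in $\cG_e$. For the lower bound you actually construct a flow of value $\ell$, reducing the routing through the $F_\nu$--$\Gamma_\rho$ layer to a bipartite degree-sequence realization with left degrees $n_\nu\in\{\lfloor\ell/r\rfloor,\lceil\ell/r\rceil\}$ and right degrees $t_\rho\le r$. Your instinct that Gale--Ryser is the only delicate step is right, and it does go through: writing $\ell=qr+s$ with $0\le s<r$, the $k$th partial sum of the sorted $n_\nu$'s is exactly $kq+\min(k,s)$, while concavity of $t\mapsto\min(t,k)$ shows $\sum_\rho\min(t_\rho,k)\ge qk+\min(s,k)$ (push the $t_\rho$ to the extremes $\{0,r\}$), so the inequalities hold at every threshold. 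What your approach buys is a rigorous argument and a clean explanation of why the restriction to $\cW'$ is necessary; what the paper's buys is brevity.
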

\begin{IEEEproof}
It is easy to see from the network structure that min-cut for every eavesdropper $ED_\tau, \tau \in \cW^\prime$ corresponds to all the edges $(Y^{in}_i, Y^{out}_i)$ to which an eavesdropper connects in each repair group. Since, every eavesdropper in $\cW^\prime$ connects to $\ell$ nodes, the minimum mincut is also $\ell$.
\end{IEEEproof}
Consider the eavesdropper $ED_\tau, \tau\in \cW^\prime$ which connects to $t_1, t_2, \ldots, t_{n/(r+1)}$ nodes in each of the repair groups. Therefore, we have
$$\sum_{\rho=1}^{n/(r+1)} t_\rho = \ell$$
where  $0\leq t_\rho \leq r, \forall \rho \in [n/(r+1)]$. Let $N_\rho^\prime, \rho \in [n/(r+1)]$ 
denote the $t_\rho \times r$ local encoding sub-matrix of 
$N_\rho$ (see, \cref{repairability_constr}) for the edges $(\Gamma_\rho, Y^{in}_i)$ 
connecting the eavesdropper to the $\rho^{th}$ repair group. Also, let
 $D_\rho, \rho\in [n/(r+1)]$ denote the $r \times \ell$ matrix corresponding to 
 the local encoding vectors  for $(F_\nu,\Gamma_\rho), \nu \in [r] $,  
for the induced graph $\cG_e$ described above. 
The matrix $A^\tau_1$ from  \cref{independence_lemma}  can be written as,
\begin{align}
A^\tau_1=
 \begin{pmatrix}
  N_1^\prime D_1 \\ N_2^\prime D_2 \\ \vdots \\ N_{\frac{n}{r+1}}^\prime D_{\frac{n}{r+1}} 
 \end{pmatrix}.
\end{align}
We need all of the matrices $A^\tau_1, \tau \in \cW'$ to be full-rank simultaneously. 
 Now  using \cref{mincut_e} we can see that these  constraints on the matrices $\bf{D}_\rho$s can all be satisfied simultaneously --with the local repairability and multicast capacity--  for all $\tau \in \cW^\prime$ for a large enough alphabet size \cite{ho2006random}, \cite[Lemma 4]{dimakis2010network}. Therefore, a random LNC  satisfies the full rank constraints of \cref{independence_lemma}. 

Therefore, for the random LNC obtained above, for any eavesdropper $ED_\tau$ observing $\cE^\tau \subseteq [n]$,
$I(\bfs_{[k_0]\setminus [\ell]};\bfc_{\cE^\tau})=0.$
Since the data collectors can recover $\bfs$ from any $m$ nodes and $H(\bfs_{[k_0]\setminus [\ell]}|\bfs) = 0$, the secret
 is recoverable from any $m$ shares. Therefore, we have an $(n,k,\ell,m,r)$-scheme achieving the upper bound in \cref{eq:trivial}.


\end{appendices}

\bibliographystyle{abbrv}
\bibliography{aryabib}

\begin{thebibliography}{10}

\bibitem{ahlswede2000network}
R.~Ahlswede, N.~Cai, S.-Y. Li, and R.~W. Yeung.
\newblock Network information flow.
\newblock {\em Information Theory, IEEE Transactions on}, 46(4):1204--1216,
  2000.

\bibitem{Beimel_secret_sharing_schemes}
A.~Beimel.
\newblock Secret-sharing schemes: A survey, 2011.

\bibitem{blakley1899safeguarding}
G.~R. Blakley.
\newblock Safeguarding cryptographic keys.
\newblock In {\em Managing Requirements Knowledge, International Workshop on},
  pages 313--313. IEEE Computer Society, 1899.

\bibitem{cadambe2013upper}
V.~Cadambe and A.~Mazumdar.
\newblock An upper bound on the size of locally recoverable codes.
\newblock In {\em Proc. IEEE Int.\ Symp.\ Network Coding}, June 2013.

\bibitem{Csirmaz1997}
L.~Csirmaz.
\newblock {The Size of a Share Must Be Large}.
\newblock {\em Journal of Cryptology}, 10(4):223--231, Nov. 1997.

\bibitem{dimakis2010network}
A.~G. Dimakis, P.~B. Godfrey, Y.~Wu, M.~J. Wainwright, and K.~Ramchandran.
\newblock Network coding for distributed storage systems.
\newblock {\em IEEE Trans.\ Inform.\ Theory}, 56(9):4539--4551, Sep. 2010.

\bibitem{GopalanHJY13}
P.~Gopalan, C.~Huang, B.~Jenkins, and S.~Yekhanin.
\newblock Explicit maximally recoverable codes with locality.
\newblock {\em Computing Research Repository}, abs/1307.4150, 2013.

\bibitem{gopalan2012locality}
P.~Gopalan, C.~Huang, H.~Simitci, and S.~Yekhanin.
\newblock On the locality of codeword symbols.
\newblock {\em IEEE Trans.\ Inform.\ Theory}, 58(11):6925--6934, Nov. 2012.

\bibitem{goparaju2013data}
S.~Goparaju, S.~El~Rouayheb, R.~Calderbank, and H.~V. Poor.
\newblock Data secrecy in distributed storage systems under exact repair.
\newblock In {\em Network Coding (NetCod), 2013 International Symposium on},
  pages 1--6. IEEE, 2013.

\bibitem{ho2006random}
T.~Ho, M.~M{\'e}dard, R.~Koetter, D.~R. Karger, M.~Effros, J.~Shi, and
  B.~Leong.
\newblock A random linear network coding approach to multicast.
\newblock {\em Information Theory, IEEE Transactions on}, 52(10):4413--4430,
  2006.

\bibitem{Ito1987}
M.~Ito, A.~Saito, and T.~Nishizeki.
\newblock {Secret Sharing Scheme Realizing General Access Structure}.
\newblock In {\em Proc.\ IEEE GLOBECOM}, 1987.

\bibitem{koyluoglu2014secure}
O.~O. Koyluoglu, A.~S. Rawat, and S.~Vishwanath.
\newblock Secure cooperative regenerating codes for distributed storage
  systems.
\newblock {\em IEEE Transactions on Information Theory}, 60(9):5228--5244,
  2014.

\bibitem{mazumdarachievable}
A.~Mazumdar.
\newblock Achievable schemes and limits for local recovery on a graph.
\newblock In {\em Proc.\ Allerton Conf.\ Commun., Contr., Computing}, 2014.

\bibitem{mazumdar2013duality}
A.~Mazumdar.
\newblock On a duality between recoverable distributed storage and index
  coding.
\newblock In {\em Proc.\ Int.\ Symp.\ Inform.\ Theory}, pages 1977--1981. IEEE,
  2014.

\bibitem{mazumdar2014storage}
A.~Mazumdar.
\newblock Storage capacity of repairable networks.
\newblock {\em Information Theory, IEEE Transactions on}, 61(11), 2015.

\bibitem{papailiopoulos2012locally}
D.~S. Papailiopoulos and A.~G. Dimakis.
\newblock Locally repairable codes.
\newblock In {\em Proc.\ Int.\ Symp.\ Inform.\ Theory}, pages 2771--2775,
  Cambridge, MA, July 2012.

\bibitem{pawar2011securing}
S.~Pawar, S.~El~Rouayheb, and K.~Ramchandran.
\newblock Securing dynamic distributed storage systems against eavesdropping
  and adversarial attacks.
\newblock {\em Information Theory, IEEE Transactions on}, 57(10):6734--6753,
  2011.

\bibitem{rawat2012optimal}
A.~S. Rawat, O.~O. Koyluoglu, N.~Silberstein, and S.~Vishwanath.
\newblock Optimal locally repairable and secure codes for distributed storage
  systems.
\newblock preprint, arXiv:1210.6954, 2012.

\bibitem{rawat2014cooperativelocal}
A.~S. Rawat, A.~Mazumdar, and S.~Vishwanath.
\newblock Cooperative local repair in distributed storage.
\newblock {\em EURASIP Journal on Advances in Signal Processing}, 2015(107),
  2015.

\bibitem{shah2011information}
N.~B. Shah, K.~Rashmi, and P.~V. Kumar.
\newblock Information-theoretically secure regenerating codes for distributed
  storage.
\newblock In {\em Proc.\ IEEE GLOBECOM}. IEEE, 2011.

\bibitem{shah2013secure}
N.~B. Shah, K.~Rashmi, and K.~Ramchandran.
\newblock Secure network coding for distributed secret sharing with low
  communication cost.
\newblock In {\em Proc.\ Int.\ Symp.\ Inform.\ Theory}, pages 2404--2408. IEEE,
  2013.

\bibitem{shamir1979share}
A.~Shamir.
\newblock How to share a secret.
\newblock {\em Communications of the ACM}, 22(11):612--613, 1979.

\bibitem{shum2013cooperative}
K.~W. Shum and Y.~Hu.
\newblock Cooperative regenerating codes.
\newblock {\em IEEE Transactions on Information Theory}, 59(11):7229--7258,
  2013.

\bibitem{barg2013family}
I.~Tamo and A.~Barg.
\newblock A family of optimal locally recoverable codes.
\newblock {\em IEEE Transactions on Information Theory}, 60(8):4661--4676, Aug
  2014.

\bibitem{tandonnew}
R.~Tandon and S.~Mohajer.
\newblock New bounds for distributed storage systems with secure repair.
\newblock In {\em Proc.\ Allerton Conf.\ Commun., Contr., Computing}, 2014.

\end{thebibliography}

\balance

\end{document}